\newcommand{\set}[1]{\{#1\}}
\renewcommand{\implies}{\to}
\renewcommand{\iff}{\leftrightarrow}
\newcommand{\N}{\mathbb{N}}
\newcommand{\B}{\mathbb{B}}
\newcommand{\prop}{\mathit{Prop}}
\newcommand{\tmax}{\mathbf{Max}}
\newcommand{\eng}[1]{\left\langle#1\right\rangle}
\newcommand{\engexists}[1]{\left\langle#1\right\rangle^{\circ}}
\newcommand{\engexistsomega}[1]{\left\langle#1\right\rangle^{\circ}_{\N}}
\newcommand{\engomega}[1]{\left\langle#1\right\rangle_{\N}}
\newcommand{\res}[1]{\left\lfloor#1\right\rfloor}
\newcommand{\chain}{\mathbf{Ch}}
\newcommand{\UI}{\mathbf{UI}}
\newcommand{\GUI}{\mathbf{GUI}}
\newcommand{\GDC}{\mathbf{GDC}}
\newcommand{\AC}{\mathbf{AC}}
\newcommand{\FC}{\mathbf{FC}}
\newcommand{\OPEN}{\mathbf{OPEN}}
\renewcommand{\part}{\mathcal{P}}
\renewcommand{\Pi}{\prod}
\renewcommand{\Sigma}{\sum}
\newcommand{\Ind}{\mathbf{Ind}}
\renewcommand{\hat}[1]{\eng{#1}}
\newcommand{\EMPCF}{\mathbf{\exists MPCF}}
\newcommand{\LTT}{\mathbf{TTL}}
\newcommand{\LTTomega}{\mathbf{TTL}^{\N}}
\newcommand{\Zorn}{\mathbf{Zorn}}
\newcommand{\1}{\mathds{1}}
\newcommand{\Subchain}{\mathbf{SCh}}
\newcommand{\dom}{\mathbf{dom}}
\newcommand{\tmaxpf}{\mathbf{Max}_{\mathbf{dpf}}}
\newcommand{\tmaxrpf}{\mathbf{Max}_{\mathbf{rpf}}}
\newcommand{\G}{\mathcal{G}}
\newcommand{\inl}{\mathrm{inl}}
\newcommand{\inr}{\mathrm{inr}}
\newcommand{\impliescl}{\implies_{\mathrm{cl}}}
\newcommand{\parrow}{\to_\mathrm{p}}
\newcommand{\EMPCFm}{\EMPCF^-}
\newcommand{\LC}{\mathbf{C}}
\author{Hugo Herbelin}{Université de Paris Cité, Inria, CNRS, IRIF}{}{}{}
\author{Jad Koleilat}{Université Paris Cité}{}{}{}
\authorrunning{H. Herbelin, J. Koleilat}
\titlerunning{On the logical structure of some maximality and well-foundedness principles}
\keywords{axiom of choice, Teichmüller-Tukey lemma, update induction, constructive reverse mathematics}
\begin{document}

\title{On the logical structure of some maximality and well-foundedness principles equivalent to choice principles}

\maketitle

\begin{abstract}
We study the logical structure of \emph{Teichmüller-Tukey lemma}, a
maximality principle equivalent to the axiom of choice and show that
it corresponds to the generalisation to arbitrary cardinals of
\emph{update induction}, a well-foundedness
principle from constructive mathematics classically equivalent to
the axiom of dependent choice.

From there, we state general forms of maximality and well-foundedness
principles equivalent to the axiom of choice, including a variant of
Zorn's lemma. A comparison with the general class of choice
and bar induction principles given by Brede and the first author is
initiated.

\end{abstract}

\section{Introduction}

\subsection{Context} 

The axiom of choice is independent of Zermelo-Fraenkel set theory and
equivalent to many other
formulations~\cite{HorstHerrlich,Jech73,RubinRubin}, the most famous
ones being Zorn's lemma, a maximality statement, and Zermelo's
theorem, a well-ordering thus also well-foundedness theorem, since
well-foundedness and well-ordering are logically dual notions.

In the family of maximality theorems equivalent to the axiom of choice
one statement happens to be particularly concise and general, it is
Teichmüller-Tukey lemma, that states that every non-empty collection
of \emph{finite character}, that is, characterised only by its finite
sets, has a maximal element with respect to inclusion.

The axiom of dependent choice is a strict consequence of the
axiom of choice. In the context of constructive mathematics, various
statements classically but non intuitionistically equivalent to the
axiom of dependent choice are considered, such as bar induction, open
induction~\cite{Coquand92}, or, more recently, update induction~\cite{Open_Induction},
the last two relying on a notion of \emph{open} predicate over functions of
countable support expressing that the predicate depends only on finite
approximations of the function.

In a first part of the paper, we reason intuitionistically and show
that the notion of finite character, when specialised to countable
sets, is dual to the notion of open predicate, or, alternatively, that
the notion of open predicate, when generalised to arbitrary cardinals
is dual to the notion of finite character. As a consequence, we
establish that
update induction and the specialisation of Teichmüller-Tukey lemma to
countable sets are logically dual statements, or,
alternatively, that Teichmüller-Tukey lemma and the generalisation of
update induction to arbitrary cardinals are logically dual.

In a second part of the paper, we show how Teichmüller-Tukey lemma and
Zorn's lemma can be seen as mutual instances the one of the other.

Finally, in a third part, we introduce a slight variant of
Teichmüller-Tukey lemma referring to functions rather than sets and
make some connections with the classification of choice and bar
induction principles studied by Brede and the first author in~\cite{Herbelin1}.

The ideas of Section 2 have been developed during an undergraduate
internship of the second author under the supervision of the first
author in 2022, leading to the idea in Section 4.1 of introducing
$\EMPCF$ by the second author. Section 3 contains extra investigations
made in 2023 by the second author. Section 4.2 contains investigations
made jointly in 2024 by the authors.

\subsection{The logical system} 

In this section we describe the logical setting and give definitions that are used throughout the article. The results we prove do not depend greatly on its structure as they require only basic constructions, we shall make precise exactly was is necessary and what is left to the preferences of the reader.\\

We work in an intuitionistic higher order arithmetic equipped with inductive types like the type with one element ($\1$, $0 : \1$), the type of Boolean values ($\B$, $0_{\B}, 1_{\B}:\B$), the type of natural numbers ($\N$), the product type ($A \times B$), or the coproduct type ($A+B$). In particular, we write $B_\bot$ for the coproduct of $B$ and of $\1$, identifying $b : B$ with $\inl(b) : B_\bot$ and $\bot$ with $\inr(0)$ where $\inl$ and $\inr$ are the two injections of the coproduct.

We write $\prop$ for the type of propositions. For all types $A$, the type $\part(A)$ denotes the type $A \to \prop$, we shall sometimes refer to it as ``subsets of $A$''. We also use the type $\N \to A_\bot$, shortly $A_\bot^\N$, to represent the countable subsets of $A$, implicitly referring to the non-$\bot$ elements of the image of the function\footnote{For inhabited $A$, this is intuitionistically equivalent to considering $\N \to A$.}.

We also require a type for lists:  for all types $A$ we denote by $A^*$ the type of lists of terms of type $A$ defined as follows:
\begin{equation*}
\begin{prooftree}[center=false]
\hypo{}
\infer1{\varepsilon : A^*}
\end{prooftree}
\qquad 
\begin{prooftree}[center=false]
\hypo{u : A^* \quad a : A}
\infer1{u@a : A^*}
\end{prooftree}
\end{equation*}
We inductively define $\star : A^* \to A^* \to A^*$, the concatenation of two lists:
\begin{equation*}
\begin{prooftree}[center=false]
\hypo{u : A^*}
\infer1{u \star \varepsilon := u}
\end{prooftree}
\qquad 
\begin{prooftree}[center=false]
\hypo{u : A^* \quad  v : A^* \quad a : A}
\infer1{u \star (v@a) := (u \star v)@a}
\end{prooftree}
\end{equation*}
We denote by $[a_1, \dots, a_n]$ the list $(\dots(\varepsilon @ a_1) @ \dots ) @ a_n)$, since $\star$ is associative we drop the parentheses. If $n\in\N$ and $\alpha:A^\N$, we write $\alpha_{|n}$ for the recursively defined list $[\alpha(0), \ldots, \alpha(n-1)]$. We define $\in \ : A \to A^* \to \prop$ as: $a \in u := \exists v,w^{A^*}, \ v \star [a] \star w = u$. \\

The symbol $\in$ will be used as defined above and also as a notation for $P(a)$. To be more precise, for all types $A$, $P : \part(A)$ and $a : A$ we will write $a \in P$ for $P(a)$ and $a \notin P$ for $P(a) \implies \bot$. Continuing with the set-like notations, for $P, Q : \part(A)$ we write $P \subseteq Q$ for $\forall {a ^A}, a \in P \implies a \in Q$. We require extensional equality for predicates: for all $P,Q : \part(A)$, $P = Q \iff P \subseteq Q \land Q \subseteq P$\footnote{Extensionality for predicates is assumed for convenience, it is not fundamentally needed}. The symbol $\subseteq$ will also be used for lists: for all $u, v : A^*$, $u \subseteq v := \forall {a^A}, a \in u \implies a \in v$. Note that equipped with this relation, lists behave more like finite sets than lists. Nevertheless the list structure is not superfluous as will be shown later.

As a convention, we let the scope of quantifiers spans until the end of the sentence, so, for instance, $\forall n, P \implies Q$ reads as $\forall n, (P \implies Q)$ and similarly for $\exists$.

\subsection{Closure operators and partial functions}
\label{sec:closure}

Let us now define some closure operators and relations on subsets and lists:

\begin{definition} Let $A$ be a type, $u: A^*$, $\alpha : \part(A)$, $T:  \part(A^*)$, $P : \part(\part(A))$
\begin{align*}
&u \subset \alpha \ : \prop                                            &&\eng{T} \ :  \part(\part(A)) \\
&u \subset \alpha\ := \forall {a^A}, a \in u \to a \in \alpha &&\eng{T} \ :=  \lambda \alpha^{\part(A)}. \forall {u^{A^*}}, u \subset \alpha \to u \in T \\
&                             &&\engexists{T} \ :  \part(\part(A)) \\
&                             &&\engexists{T} \ := \lambda \alpha^{\part(A)}. \exists {u^{A^*}}, u \subset \alpha \land u \in T \\
&\hat{u} \ :  \part(A)                                     &&\res{P} \ : \part(A^*)\\
&\hat{u} \ :=  \lambda x^A. \ x \in u                      &&\res{P} \ := \lambda u^{A^*}. \ \hat{u} \in P
\end{align*}
\end{definition}

The symbol $\eng{ ~ }$ is the translation from ``the list world'' to ``the predicate world''. More precisely, $\hat{u}$ is the canonical way to see a list as a predicate ($u \subset \alpha \iff \hat{u} \subseteq \alpha$) and $\eng{T}$ is an extension of $T$ as a predicate on subsets, $\alpha : \part(A)$ is in $\eng{T}$ if and only if it can be arbitrarily approximated by lists of $T$. Dually, $\res{ ~ }$ is the translation from predicate to list, taking predicate of finite domain to all lists of elements in the domain. Note that $\eng{T}$ is downward closed, that is, $\alpha \subset \beta$ and $\beta \in \eng{T}$ implies $\alpha \in \eng{T}$. Note also that $\eng{\res{P}}$ is a downward closure operator, defining the largest downward closed subset of $P$. On its side, $\res{\eng{T}}$ builds the downward closure up to permutation and replication of the elements of the lists of $T$. Also, symmetrical properties applies to $\engexists{~}$ exchanging downward with upward and largest subset with smallest superset. Finally, notice that $\eng{T}$ may be empty, in fact $\eng{T}$ is inhabited if and only if $\varepsilon \in T$, and the same for $\engexists{T}$.

\paragraph*{Examples: }
Consider $T : \part(\B^*)$, for simplicity let us use set-like notations when defining $T$. If $T := \set{[1_\B,0_\B], [1_\B], [0_\B], \epsilon}$ then $\eng{T}$ will contain all subsets of $\B$. Now, if $T := \set{[1_\B,0_\B], [1_\B], [0_\B]}$, $\eng{T}$ will be empty since for all $\alpha : \part(\B)$, $\epsilon \subset \alpha$ but $\epsilon \notin T$. If $T := \set{\epsilon, [1_\B], [1_\B, 0_\B]}$ then $\eng{T}$ will contain only the empty subset and the singleton containing $1_\B$. Now consider $T' := \set{\epsilon, [1_\B], [1_\B, 1_\B], [0_\B, 1_\B], [1_\B, 0_\B, 1_\B, 1_\B]}$, notice that $\eng{T} = \eng{T'}$. The $\eng{ ~}$ operation does not care for duplications or permutations.\\
For $T := \set{ \epsilon, [1_\B],[1_\B, 0_\B]}$, $\res{\eng{T}}$ is $\set{\epsilon, [1_\B], [1_\B, 1_\B], [1_\B, 1_\B, 1_\B], \dots}$. Similarly, for $T := \set{ \epsilon, [1_\B], [0_\B], [1_\B, 0_\B]}$, $\res{\eng{T}}$ is the set of all lists on $\B$.\\
The $\engexists{ ~}$ operator has the dual behaviour. Consider $T : \part(\N^*)$, $T := \set{[1]}$ then, $\engexists{T}$ contains exactly all subsets of $\N$ containing $1$. Similarly if $\epsilon \in T$, then $\engexists{T}$ contains all subsets of $\N$. For $T := \set{[1]}$, $\res{\engexists{T}}$ will contain every list on $\N$ that contains at least one 1.\\

We also give similar definitions relatively to countable subsets, abbreviating $(A_\bot)^*$ into~$A_\bot^*$:

\begin{definition} Let $A$ be a type, $u: A_\bot^*$, $\alpha : A_\bot^\N$ and $T:  \part(A_\bot^*)$
\begin{align*}
&u \subset_\N \alpha \ : \prop                               &&\engomega{T} \ :  \part(A_\bot^\N) \\
&u \subset_\N \alpha\ := \exists {n^\N}, u = \alpha_{|n} &&\engomega{T} \ := \lambda \alpha^{A_\bot^\N}. \forall {u^{A^*}}, u \subset_{\N} \alpha \to u \in T \\
&                             &&\engexistsomega{T} \ :  \part(A_\bot^{\N}) \\
&                             &&\engexistsomega{T} \ := \lambda \alpha^{A_\bot^\N}. \exists {u^{A^*}}, u \subset_{\N} \alpha \land u \in T
\end{align*}
\end{definition}

We conclude this section defining two different notions of partial functions:

\begin{definition}[Relational partial function]
Let $A,B$ be types, a relational partial function $f$ from $A$ to $B$ is a relational functional relation of $\part(A \times B)$. Formally, a relational partial function from $A$ to $B$ is a term $f : \part(A \times B)$ such that $\forall a^A, \forall b,b'^B, \ ((a,b) \in f \land (a,b') \in f) \implies b = b'$. Its domain is defined by:
\begin{align*}
\dom(f) &: \part(A) \\
\dom(f) &:= \lambda a^A. \exists b^B, (a,b) \in f
\end{align*}
For all $a' : A$, we denote by $\dom(f) \cup a'$ the predicate $\lambda a^A. (\exists b^B, (a,b) \in f ) \lor a = a'$.
\end{definition}

\begin{definition}[Decidable partial function]
Let $A,B$ be types, a decidable partial function $f$ from $A$ to $B$ is a total function $f : A \to B_\bot$. Its domain and graph are defined by:
\begin{align*}
&\dom(f) : \part(A)                                           &&\G(f) : \part(A \times B) \\
&\dom(f) := \lambda a^A. f(a) \neq \bot        &&\G(f) := \lambda (a,b)^{A \times B}. f(a) = \inl(b)
\end{align*}
For all $a' : A$, we denote by $\dom(f) \cup a'$ the predicate $\lambda a^A. \ f(a) \neq \bot \lor a = a'$.
\end{definition}

\paragraph*{Notation:} We write $f \in A \parrow B$ to denote that $f$ is a relational partial function from $A$ to $B$ and $f : A \to B_\bot$ for the type of decidable partial functions from $A$ to $B$. We will also write $\Theta f^{A \parrow B}, P$ for $\Theta f^{\part(A \times B)}, (f \in A \parrow B) \implies P$ for $\Theta \in \set{\lambda, \forall, \exists}$. \\

The difference between these two definitions is in the decidability of the domain: decidable partial functions have a decidable domain while it's not the case of relational partial functions. The graph operation $\G$ allows us to recover a relational partial function from a decidable partial function. One needs excluded middle to recover a decidable partial function from a relation partial function, hence decidable partial functions are stronger axiomatically. Notice that we used the same notation $\dom$ in both definitions. Since they both have the same semantic meaning and we will make clear whether we are using relation partial function or decidable partial function, it should not cause any confusion.

\section{$\LTT$ and $\UI$}

In this section, we define Teichmüller-Tukey lemma and update induction and emphasise that they are logically dual, up to the difference that the former is relative to predicates over subsets of arbitrary cardinals while update induction is relative to predicates over countable subsets. Underneath, they rely on the dual notions of predicate of finite character and of open predicate.

\subsection{Predicates of finite character} 

A set is of \emph{finite character} if all its information is contained in its finite elements. In our setting, a predicate $P : \part(\part(A))$ is of finite character if all its information is contained in a predicate over lists. There are two canonical ways to express this:

\begin{definition}[Finite character]
Let $A$ be a type and $P : \part(\part(A))$. We propose two definitions of finite character:
\begin{align*}
P \in \FC_1 :=&  \ \forall {\alpha^{\part(A)}},  \alpha \in P \ \iff \  \forall {u^{A^*}},  u \subset \alpha \to u \in \res{P}\\
\nonumber \\
P \in \FC_2 :=&  \ \exists {T^{\part(A^*)}},  \eng{T} = P
\end{align*}
\\
Rewriting $\FC_1$ using the terms just defined:
\begin{equation*}
P \in \FC_1 := P = \eng{\res {P}}
\end{equation*}
\end{definition}

$\FC_1$ and $\FC_2$ are, in essence, paraphrases of one an other, thus there is no reason not to expect them to be equivalent. First we will need a lemma:

\begin{lemma}
\label{lemma_fcl}
Let $A$ be a type and $T : \part(A^*)$ then $\eng{T} \in \FC_1$.
\end{lemma}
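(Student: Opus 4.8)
The statement unfolds, by definition of $\FC_1$, to the extensional equality $\eng{T} = \eng{\res{\eng{T}}}$, which by the assumed extensionality for predicates amounts to proving the two inclusions $\eng{T} \subseteq \eng{\res{\eng{T}}}$ and $\eng{\res{\eng{T}}} \subseteq \eng{T}$. The plan is to establish each by a direct unfolding of the definitions of $\eng{\cdot}$, $\res{\cdot}$, $\hat{\cdot}$, $\subset$ and $\subseteq$, using only two elementary facts: that $u \subset \alpha$ is equivalent to $\hat{u} \subseteq \alpha$, and that $u \subset \hat{u}$ holds for every list $u$ (immediate from the definition of $\hat{u}$).

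For the inclusion $\eng{T} \subseteq \eng{\res{\eng{T}}}$, I would fix $\alpha$ with $\alpha \in \eng{T}$ and a list $u$ with $u \subset \alpha$, and show $u \in \res{\eng{T}}$, i.e. $\hat{u} \in \eng{T}$. So let $v$ be a list with $v \subset \hat{u}$; chaining $v \subset \hat{u}$ with $\hat{u} \subseteq \alpha$ (the latter coming from $u \subset \alpha$) gives $v \subset \alpha$, and then $v \in T$ follows from $\alpha \in \eng{T}$. Hence $\hat{u} \in \eng{T}$, as required. (This is exactly the downward-closure remark about $\eng{T}$ made after the definitions, specialised to $\beta = \hat{u}$.)

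For the converse inclusion $\eng{\res{\eng{T}}} \subseteq \eng{T}$, I would fix $\alpha \in \eng{\res{\eng{T}}}$ and a list $u$ with $u \subset \alpha$, and show $u \in T$. From $\alpha \in \eng{\res{\eng{T}}}$ and $u \subset \alpha$ we get $u \in \res{\eng{T}}$, that is $\hat{u} \in \eng{T}$, which means $\forall v^{A^*},\, v \subset \hat{u} \to v \in T$. Instantiating this with $v := u$ and using $u \subset \hat{u}$ yields $u \in T$. This closes the proof.

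I do not expect a genuine obstacle here: the only thing to be careful about is keeping the roles of the three ``levels'' (lists, predicates, predicates over predicates) straight and correctly transporting the containment $u \subset \alpha$ to $\hat{u} \subseteq \alpha$ and back. No choice principle, no classical logic, and no extensionality beyond the stated predicate extensionality is used.
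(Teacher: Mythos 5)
Your proof is correct and follows essentially the same route as the paper's: the first inclusion is the downward-closure of $\eng{T}$ applied to $\hat{u} \subseteq \alpha$, and the second is the instantiation of $\hat{u} \in \eng{T}$ at $v := u$ via $u \subset \hat{u}$. You merely spell out the two elementary facts that the paper leaves implicit ("so can $u$" and "we easily show"), which is fine.
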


\begin{proof}
Let $\alpha : \part(A)$. Suppose $\alpha \in \eng{T}$, our goal is to show that $\alpha \in \eng{\res{\eng{T}}}$. Let $u : A^*$ such that $u \subset \alpha$, we will show that $u \in \res{\eng{T}}$. By definition $u \in \res{\eng{T}}$ if and only if $\hat{u} \in \eng{T}$ if and only if every sublist of $u$ is in $T$. Since $\alpha$ can be arbitrarily approximated by terms of $T$ and $u \subset \alpha$, so can $u$. Hence, $u \in \res{\eng{T}}$ thus, $\alpha \in \eng{\res{\eng{T}}}$.\\
Suppose $\alpha \in \eng{\res{\eng{T}}}$, then for all $u : A^*$ such that $u \subset \alpha$, $u \in \res{\eng{T}}$ which we can rewrite as $\hat{u} \in \eng{T}$. We easily show that $\hat{u} \in \eng{T} \implies u \in T$ thus $\alpha \in \eng{T}$.
\end{proof}

We have shown that $\eng{T} = \eng{\res { \eng{T}}}$. This means that without loss of generality, we can require in $\FC_2$ that the witness $T$ be of the form $\res { \eng{T'}}$ for some $T'$. This is a way to express that $T$ can be chosen to be minimal. In fact if we are given $P$ and $T$ such as in $\FC_2$, it may happen that $T$ contains a list $u$ that is not closed under $\subseteq$ (i.e.. $v \subseteq u \not \implies v \in T$). Such an $u$ will be invisible when looking at $\eng{T}$, hence we can consider $u$ as a superfluous term. The $\res { \eng{ ~ }}$ operation allows us, without loss of generality, to remove those terms, thus making $T$ minimal. 

\begin{theorem}
\label{th_FC}
$\FC_1 \ \iff \ \FC_2$
\end{theorem}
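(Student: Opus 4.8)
The plan is to prove the two implications separately, observing that almost all of the work has already been done in Lemma~\ref{lemma_fcl}.

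For the direction $\FC_1 \implies \FC_2$, I would simply exhibit a witness. Assume $P = \eng{\res{P}}$. Then setting $T := \res{P}$ gives $\eng{T} = \eng{\res{P}} = P$, so $P \in \FC_2$. No computation is needed beyond unfolding $\FC_1$ and reading off the witness.

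For the direction $\FC_2 \implies \FC_1$, suppose we are given $T : \part(A^*)$ with $\eng{T} = P$. By Lemma~\ref{lemma_fcl} applied to this $T$, we have $\eng{T} = \eng{\res{\eng{T}}}$. Substituting $P$ for $\eng{T}$ (using predicate extensionality to transport the equality, which is harmless since it is assumed in the ambient system) yields $P = \eng{\res{P}}$, i.e.\ $P \in \FC_1$.

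There is essentially no obstacle here: the genuine content — that $\eng{\res{\eng{T}}}$ collapses back to $\eng{T}$, i.e.\ that the list-level closure is idempotent from the predicate side — is exactly Lemma~\ref{lemma_fcl}, and the theorem is just the packaging of that fact into the equivalence of the two formulations of finite character. The only thing to be a little careful about is the direction of the rewriting in the second implication, namely that one uses the equation $\eng{T} = \eng{\res{\eng{T}}}$ after, not before, identifying $\eng{T}$ with $P$; but this is purely cosmetic.
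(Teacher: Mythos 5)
Your proof is correct and follows exactly the paper's argument: the left-to-right direction exhibits $\res{P}$ as the witness for $\FC_2$, and the right-to-left direction applies Lemma~\ref{lemma_fcl} to the given witness $T$ and rewrites $\eng{T}$ as $P$. Nothing to add.
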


\begin{proof}
Let $A$ be a type and $P : \part(\part(A))$. From left to right: suppose $P \in \FC_1$. $\res{P}$ is a witness of $P \in \FC_2$.\\
From right to left: suppose $P \in \FC_2$, let $T$ be the witness of $P \in \FC_2$. By lemma \ref{lemma_fcl} $\eng{\res{\eng{T}}} = \eng{T}$ and by hypothesis $P = \eng{T}$, we can rewrite the first equality as $\eng{\res{P}} = P$.
\end{proof}
Since $\FC_1$ and $\FC_2$ are equivalent, we will from now on write $\FC$ without the indices.

\subsection{Open predicates}

A notion of \emph{open} predicates over functions of countable domain
was defined in Coquand~\cite{Coquand92} and generalised by
Berger~\cite{Open_Induction}. Using the definitions of
Section~\ref{sec:closure}, a predicate over $\alpha:A^{\N}$ is
\emph{open} in the sense of Berger if it has the form $\alpha \in
\engomega{T} \implies \alpha \in \engexistsomega{U}$ for some $T, U
: \part(A^*)$. In order to get a closer correspondence with the notion
of finite character, we will however stick to Coquand's
definition. Additionally, to get a closer correspondence with the case
of open predicates used in update induction, we consider open
predicates for functions to $A_\bot$.

\begin{definition}[Countably-open predicate, in Coquand's sense, with partiality]
Let $A$ be a type and $P : \part(A_\bot^{\N})$. We define:
\begin{align*}
P \in \OPEN_{\N} :=& \  \exists T^{\part(A_\bot^*)},  \engexistsomega{T} = P
\end{align*}
\end{definition}

The observations made on predicates of finite character apply to
countably-open predicates, namely that $\engexistsomega{T} =
\engexistsomega{\res{\engexistsomega{T}}}$. Obviously, we can also
move from $A_\bot^{\N}$ to $\part(A)$ and introduce a general notion of
open predicates which again, will satisfy $\engexists{T} =
\engexists{\res{\engexists{T}}}$:

\begin{definition}[Open predicate]
Let $A$ be a type and $P : \part(\part(A))$. We define:
\begin{align*}
P \in \OPEN :=& \ \exists T^{\part(A^*)},  \engexists{T} = P
\end{align*}
\end{definition}

Conversely, we can define a notion of predicate of countably-finite
character dual the notion of countably-open predicate:

\begin{definition}[Predicate of countably-finite character]
Let $A$ be a type and $P : \part(A_\bot^{\N})$. We define:
\begin{align*}
P \in \FC_{\N} :=& \  \exists T^{\part(A_\bot^*)},  \engomega{T} = P
\end{align*}
\end{definition}

This finally results in
the following dualities:

\begin{table}[H]
  \caption{Predicates of finite character VS Open predicate}
  \label{table:def:opposite}
\renewcommand*{\arraystretch}{1.4}
\begin{center}
\begin{tabular}{|p{3cm}|p{4.5cm}|p{4.5cm}|}
\hline
 & \hfil Universal notion  & \hfil Existential notion\\
\hline
\hfil Arbitrary subsets & \hfil Finite character  & \hfil Open\\
\hfil Countable subsets & \hfil Countably-finite character  & \hfil Countably-open  \\
\hline
\end{tabular}
\end{center}
\end{table}

\subsection{Teichmüller-Tukey lemma and Update induction} 
\label{Teichmüller-Tukey lemma and Update induction}

Before defining Teichmüller-Tukey lemma we need a few definitions:

\begin{definition}
Let $A$ be a type, $P : \part(\part(A))$ and $\alpha, \beta : \part(A)$. We define:
\begin{align*}
\beta \prec \alpha  &:  \prop \\
\beta \prec \alpha &:= \exists {a^A}, a \notin \alpha \land \beta = (\lambda x^A. \ x \in \alpha \lor x = a) \\ 
\alpha \in \tmax_\prec(P)  &:  \prop \\
\alpha \in \tmax_\prec(P)  &:= \alpha \in P \land \forall {\beta^{\part(A)}},  \beta \prec \alpha \implies \beta \notin P
\end{align*}
\end{definition}

Thus, $\beta \prec \alpha$ stands for $\beta$ extends $\alpha$ (if $\beta$ is an update of $\alpha$) while $\tmax_\prec(P)$ is the predicate of maximal elements of $(P, \succ)$ ($\succ$ is the reverse of $\prec$).\\

What we are interested in are predicates of finite character but Theorem \ref{th_FC} allows us to consider only predicates on lists since there is a correspondence between them. Hence, we will quantify or instantiate schemas on predicate on lists.

\begin{definition}[Teichmüller-Tukey lemma]
Let $A$ be a type and $T: \part(A^*)$, we define the Teichmüller-Tukey lemma, $\LTT_{AT}$:
\begin{equation*}
(\exists {\alpha^{\part(A)}}, \alpha \in \eng{T}) \implies \exists {\alpha^{\part(A)}},  \ \alpha \in \tmax_\prec(\eng{T})
\end{equation*}
\end{definition}

\paragraph*{Notations:} $\LTT$ denotes the full schema: for all types $A$ and all $T : \part(A^*)$, \ $\exists {\alpha^{\part(A)}}, \alpha \in \eng{T} \implies \exists {\alpha^{\part(A)}}, \alpha \in \tmax_\prec(\eng{T})$.\\
$\LTT_{AT}$ denotes the schema specialised in this $A$ and this $T$.\\
$\overline{\LTT_{AT}}$ denotes the restriction of the full schema $\LTT$ to $A$ and $T$ of a particular shape. For example: $\overline{\LTT_{\N T}}$ is the schema: for all $T : \part(\N^*)$, \ $\exists {\alpha^{\part(\N)}}, \alpha \in \eng{T} \implies \exists {\alpha^{\part(\N)}}, \alpha \in \tmax_\prec(\eng{T})$. Moreover, if $C_A$ denotes a particular collection of predicates over lists of $A$ ($A$ is a parameter), then $\overline{\LTT_{A C_A}}$ denotes the restrictions of the schema $\LTT$ to any $A$ type and $T :  \part(A^*)$ that is in $C_A$.\\

Following an earlier remark, we impose that the finite character predicate we are considering must be inhabited, without this $\LTT$ becomes trivially inconsistent. Having defined $\LTT$ we now show that we can recover an induction principle by using contraposition and Morgan's rules:\\
\\
Unfolding some definitions, $\LTT_{AT}$ is
\begin{equation*}
(\exists { \alpha^{\part(A)}}, \alpha \in \eng{T}) \implies \exists { \alpha^{\part(A)}}, \alpha \in \eng{T} \land (\forall { \beta^{\part(A)}},  \beta \prec \alpha \implies \beta \notin \eng{T})
\end{equation*}
Contraposing and pushing some negations:
\begin{equation*}
\forall { \alpha^{\part(A)}}, [ \neg (\alpha \in \eng{T}) \lor \neg \forall { \beta^{\part(A)}},  \beta \prec \alpha \implies \beta \notin \eng{T} ] \implies \forall { \alpha^{\part(A)}}, \alpha \notin \eng{T}
\end{equation*}
We have a sub-formula of the form $\neg A \lor \neg B$, we have the choice of writing it as $A \implies \neg B$ or $B \implies \neg A$. The first choice leads to a principle we will call $\LTT^{\mathrm{co}}_{AT}$:
\begin{equation*}
\forall { \alpha^{\part(A)}}, [ \alpha \in \eng{T} \implies \exists { \beta^{\part(A)}},  \beta \prec \alpha \land \beta \in \eng{T} ] \implies \forall { \alpha^{\part(A)}}, \alpha \notin \eng{T}
\end{equation*}
And the second choice leads to an induction principle: 
\begin{equation*}
\forall { \alpha^{\part(A)}}, [( \forall { \beta^{\part(A)}},  \beta \prec \alpha \implies \beta \notin \eng{T}) \implies \alpha \notin \eng{T} ] \implies \forall { \alpha^{\part(A)}}, \alpha \notin \eng{T}
\end{equation*}

$\LTT^{\mathrm{co}}$ is intuitively an opposite formulation of $\LTT$. The induction principle we obtain seems to express something different. We can push further the negations in order to obtain a positive formulation of it:
\begin{equation*}
\forall { \alpha^{\part(A)}}, [( \forall { \beta^{\part(A)}},  \beta \prec \alpha \implies \beta \in \engexists{T}) \implies \alpha \in \engexists{T} ] \implies \forall { \alpha^{\part(A)}}, \alpha \in \engexists{T}
\end{equation*}

And this can be seen as as a generalisation of Berger's update
induction~\cite{Open_Induction} going from countably-open predicates
to arbitrary open predicates.

To state update induction, we need to focus on partial functions from
$\N$ to $A$ which we equip with an order:

\begin{definition}
Let $A$ be a type, $P : \part(A_\bot^{\N})$ and $\alpha, \beta : A_\bot^{\N}$. We define:
\begin{align*}
\beta \prec_N \alpha  &:  \prop \\
\beta \prec_N \alpha &:= \exists {m^\N}, \exists {a^A}, \alpha(m) = \bot \land \beta(m) = a \land \forall n^\N, n \neq m \implies \alpha(n) = \beta(n)
\end{align*}
\end{definition}

Like $\LTT$, update induction is originally defined on open predicates but since any open predicate comes from a predicate on lists, we can more primitively state it as follows:

\begin{definition}[Update induction]
Let $A$ be a type and $T: \part(A_\bot^*)$, we define Update induction, $\UI_{AT}$:
\begin{equation*}
\forall { \alpha^{A_\bot^{\N}}}, [( \forall { \beta^{A_\bot^{\N}}},  \beta \prec_N \alpha \implies \beta \in \engexistsomega{T}) \implies \alpha \in \engexistsomega{T} ] \implies \forall { \alpha^{A_\bot^{\N}}}, \alpha \in \engexistsomega{T}
\end{equation*}
\end{definition}

Contrastingly, we now formally state the dual of $\LTT$ that we obtained above:

\begin{definition}[Generalised update induction]
Let $A$ be a type and $T: \part(A^*)$, we define Generalised update induction, $\GUI_{AT}$:
\begin{equation*}
\forall { \alpha^{\part(A)}}, [( \forall { \beta^{\part(A)}},  \beta \prec \alpha \implies \beta \in \engexists{T}) \implies \alpha \in \engexists{T} ] \implies \forall { \alpha^{\part(A)}}, \alpha \in \engexists{T}
\end{equation*}
\end{definition}

Also, we introduce a countable version of $\LTT$, logically dual to $\UI$:

\begin{definition}[Countable Teichmüller-Tukey lemma]
Let $A$ be a type and $T: \part(A_\bot^*)$, we define the countable Teichmüller-Tukey lemma, $\LTTomega_{AT}$:
\begin{equation*}
(\exists {\alpha^{A_\bot^{\N}}}, \alpha \in \engomega{T}) \implies \exists {\alpha^{A_\bot^{\N}}}, \alpha \in \tmax_{\prec_\N}(\engomega{T})\\
\end{equation*}
\end{definition}

We thus obtain the following table:

\begin{table}[H]
  \caption{Maximality principles VS Induction principles}
  \label{table:def2}
\renewcommand*{\arraystretch}{1.4}
\begin{center}
\begin{tabular}{|p{3cm}|p{4cm}|p{4cm}|}
\hline
 & \hfil Finite character  & \hfil Open \\
\hline
\hfil Arbitrary subsets & \hfil $\LTT_{AT}$ & \hfil $\GUI_{AT}$ \\
\hfil Countable subsets & \hfil $\LTTomega_{AT}$ & \hfil $\UI_{AT}$ \\
\hline
\end{tabular}
\end{center}
\end{table}

In particular, since $\LTT$ is classically equivalent to the full
axiom of choice, $\GUI$ is also classically equivalent to the full
axiom of choice.

\section{$\LTT$ and Zorn's lemma} 

In this section we analyse precisely the relationships of $\LTT$ with Zorn's lemma. We go further than showing their equivalence, we look at which part of $\LTT$ (as a schema) is necessary to prove Zorn's lemma and reciprocally. This equivalence result is also a proof that the version of Teichmüller-Tukey lemma we defined captures the full choice.

\begin{definition}
Let $A$ be a type, $<$ a strict order on $A$, $a : A$ and $E, F : \part(A)$. Define: 
\begin{align*}
E \in \chain(A) &: \prop\\                                                                                       
E \in \chain(A) &:=  \forall {a,b^A},  a, b \in E \implies (a < b \lor b < a \lor a = b)\\ \\
F \in \Subchain(E) &: \prop\\
F \in \Subchain(E) &: F \subseteq E \land  F \in \chain(A)\\ \\
E \in \Ind(A) &:  \prop\\                               
E \in \Ind(A) &:= (\forall {F^{\part(A)}}, F \in \Subchain(E) \implies \exists {a^A}, a \in E \land \forall {b^A},  b \in F \implies b \leq a)  \\ \\
a \in \tmax_<(E)  &:  \prop \\
a \in \tmax_<(E)  &:= a \in E \land \forall {b^A},  a < b \implies b \notin E
\end{align*}
Where $\leq$ is the reflexive closure of $<$. 
\end{definition}

$\chain$ is the chain predicate, $\Subchain$ is the subchain predicate, $\Ind$ is the inductive ``subset'' predicate and $\tmax_<$ is simply the maximal element predicate. We choose to express these definitions in terms of predicates over types rather than directly in terms of types, to avoid discussions on proof relevance and stay in a more general setting. If we were proof-irrelevant, instantiating our schemas on predicates over types would be identical to doing it directly on types which would simplify notations and yield the same results.\\
We can now define concisely Zorn's lemma:

\begin{definition}[Zorn lemma] Let $A$ be a type, $<$ a strict order on $A$, and $E$ a predicate on $A$. $\Zorn_{A<E}$ is the following statement
\begin{equation*}
E \in \Ind \implies \exists {a^A}, a \in \tmax_<(E)
\end{equation*}
\end{definition}

\begin{theorem}
$\LTT \iff \Zorn$
\end{theorem}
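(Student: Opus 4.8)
The plan is to prove the two implications separately, in each case by exhibiting a suitable instance of the hypothesised schema and checking that its conclusion transports to the desired one. The bridge object in both directions is the following: given a type $A$ with a strict order $<$ and a predicate $E : \part(A)$, the subsets of $A$ that are chains contained in $E$ form a collection of finite character; conversely, given $T : \part(A^*)$, the collection $\eng{T}$ ordered by inclusion is an ``inductive'' poset in the sense of $\Ind$, because $\eng{T}$ is downward closed and closed under directed unions (this last point is exactly Lemma~\ref{lemma_fcl}: membership in $\eng{T}$ is witnessed finitely, so a union of a chain of members is again a member). So the two principles are really the same statement read through the translation $\eng{-}$ and $\res{-}$ between the ``list world'' and the ``predicate world''.

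\emph{From $\LTT$ to $\Zorn$.} Fix $A$, $<$, and $E \in \Ind$. First I would let $T := \res{\chain(A) \cap \part\text{-of-}E}$, i.e. $T$ is the predicate on lists $u : A^*$ holding when $\hat u \subseteq E$ and $\hat u$ is a chain for $<$; equivalently $T := \lambda u. \hat u \in \Subchain(E)$ (noting $\Subchain$ only looks at the finite content of its argument, so being a chain is a property of finite character). One checks $\eng{T} = \{\alpha : \part(A) \mid \alpha \in \Subchain(E)\}$: the nontrivial inclusion uses that a subset all of whose finite sublists are chains contained in $E$ is itself a chain contained in $E$. This collection is inhabited (by $\varepsilon$, i.e. the empty subset, since $\varepsilon \in T$). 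Apply $\LTT_{AT}$ to get a $\prec$-maximal $\alpha_0 \in \eng{T}$, i.e. a $\subseteq$-maximal chain contained in $E$: maximality with respect to $\prec$ says no one-element extension stays in $\eng{T}$, which by finite character upgrades to $\subseteq$-maximality. Now apply $E \in \Ind$ to the subchain $\alpha_0$: it yields an $a \in E$ with $b \leq a$ for all $b \in \alpha_0$. I claim $a \in \tmax_<(E)$: if $a < c$ for some $c \in E$, then $\alpha_0 \cup \{c\}$ is still a chain contained in $E$ (every element of $\alpha_0$ is $\leq a < c$) strictly larger than $\alpha_0$, contradicting maximality of $\alpha_0$ — unless $c$ was already in $\alpha_0$, in which case $c \leq a$, contradicting $a < c$. (A small care point: one must handle whether $c \in \alpha_0$; the witness $\beta = \lambda x. x \in \alpha_0 \lor x = c$ either genuinely refines $\alpha_0$ or equals it, and in the latter case $c \leq a < c$ is absurd.)

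\emph{From $\Zorn$ to $\LTT$.} Fix $A$ and $T : \part(A^*)$ with $\eng{T}$ inhabited. Instantiate $\Zorn$ with the type $\part(A)$, the strict order $\subsetneq$ (proper inclusion), and the predicate $E := \eng{T}$. I must verify $\eng{T} \in \Ind$: given a subchain $\F \subseteq \eng{T}$ that is totally ordered by $\subsetneq$, take $\gamma := \bigcup \F$, the union $\lambda x. \exists \alpha, \alpha \in \F \land x \in \alpha$. Then $\gamma$ is an upper bound, and $\gamma \in \eng{T}$ because any list $u \subset \gamma$ has each of its finitely many elements in some member of $\F$, and by totality of $\F$ all those members lie below a single $\alpha^* \in \F$, so $u \subset \alpha^* \in \eng{T}$, hence $u \in T$; thus $\gamma \in \eng{T}$. (When $\F$ is empty we need an upper bound in $\eng{T}$ — here the empty union $\gamma$ is the empty subset, which is in $\eng{T}$ iff $\varepsilon \in T$, which holds since $\eng{T}$ is inhabited.) So $\Zorn_{\part(A),\subsetneq,\eng{T}}$ gives a $\subsetneq$-maximal $\alpha_0 \in \eng{T}$, and $\subsetneq$-maximality immediately yields $\prec$-maximality, so $\alpha_0 \in \tmax_\prec(\eng{T})$, which is the conclusion of $\LTT_{AT}$.

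\emph{Main obstacle.} Neither direction is deep once the translation is set up; the real work — and the spot most likely to need care in an intuitionistic setting — is the verification that $\eng{T}$ satisfies $\Ind$, i.e. that a chain of finite-character subsets has its union again of finite character. This is where Lemma~\ref{lemma_fcl} and the finiteness of the approximating lists are used essentially: a single list $u \subset \bigcup\F$ mentions only finitely many elements, and one needs the totality of the chain $\F$ to locate a single member of $\F$ above all of them. One should be careful that this argument uses only that $\F$ is totally ordered (as $\Subchain$ provides) and finitely many membership facts, so no choice is smuggled in; and symmetrically, in the other direction, that ``$\Subchain(E)$-membership is of finite character'' is genuinely a property of the finite sublists, so that $\eng{T}$ for the chosen $T$ really equals the collection of subchains of $E$. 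I would also double-check the handling of the degenerate cases (empty chain, the refinement $\beta \prec \alpha$ possibly collapsing to $\beta = \alpha$) since those are exactly the places where a naive classical argument would gloss over something.
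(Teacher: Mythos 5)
Your proposal is correct and follows essentially the same route as the paper: for $\LTT \Rightarrow \Zorn$ you instantiate $T := \res{\Subchain(E)}$ and convert a $\prec$-maximal chain into a maximal element via inductivity, and for $\Zorn \Rightarrow \LTT$ you instantiate Zorn on $(\part(A), \subsetneq, \eng{T})$ and verify inductivity by taking unions of chains and using the finiteness of approximating lists. Your explicit handling of the degenerate cases (the empty subchain, and the possibility that the extending element already belongs to the maximal chain) is a welcome refinement of details the paper's proof leaves implicit.
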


The following is an adaptation of a usual set-theoretic proof in our setting.

\begin{proof} From left to right: fix $A$ a type, $<$ a strict order on $A$ and $E : \part(A)$ such that $E \in \Ind(A)$. We first show that $\Subchain(E)$ is of finite character:\\
\\
Let $F : \part(A)$ such that $F \in \Subchain(E)$, we show $F \in \eng{ \res{ \Subchain(E)}}$: let $u : A^*$ such that $u \subset F$, $\hat{u}$ is thus a chain of $E$ therefore $u \in \res{\chain(E)}$. Let $F : \part(A)$ such that $F \in \eng{ \res{ \Subchain(E)}}$, by choosing lists of length 2 we can show that $F$ is a subchain of $E$. Hence $\Subchain(E) \in \FC$.\\
\\
Using $\LTT_{A \res{\Subchain(E)}}$, we get $G : \part(A)$ such that $G \in \tmax(\Subchain(E))$. $G$ is a subchain of $E$, since $E$ is inductive we get $g : A$ such that $g \in E$ and $\forall {aA}, a \in G \implies a < g$. Suppose we have $h : A$ such that $g < h$ and $h \in E$ . Let $G' := \lambda a^A. a \in G \lor a = h$, then we have $G' \prec G$, since $G \in \tmax(\Subchain(E))$, $G' \notin \Subchain(E)$. On the other side, $G'$ is obviously a chain and $G' \subseteq E$, therefore $G' \in \Subchain(E)$. This is a contradiction, hence $g \in \tmax_<(E)$.\\

From right to left: let $T : \part(A^*)$. $\subset$ is a strict order on $\part(\part(A))$. Since $\eng{T}$ is of finite character, a maximal element for $\subset$ is also a maximal element for $\succ$. Hence, what is left to show is that $\eng{T}$ is inductive and use $\Zorn_{\part(A) \subset \eng{T}}$ to produce a maximal term. Let $Q : \part(\part(A))$ such that $Q \in \Subchain(\eng{T})$. Let $\alpha := \lambda a^A. \exists {\beta^{\part(A)}}, \beta \in Q \land a \in \beta$. By construction, $\alpha$ is an upper bound of $Q$, let's show that it is indeed in $\eng{T}$. Since $\eng{T}$ is of finite character it suffices to show that for all $u : A^*$, $u \subset \alpha \implies u \in T$. Let $u : A^*$ such that $u \subset \alpha$. Since $u$ is a finite list, we can enumerate its elements $a_0, \dots, a_n$. For all $0 \leq i \leq n$, let $\beta_i : \part(A)$ be such that $a_i \in \beta_i $ and $\beta_i \in Q$. Since $Q$ is chain, there exists $0 \leq i_0 \leq n$ such that for all $0 \leq i \leq n, \beta_i \subseteq \beta_{i_0}$. Thus, $u \subset \beta_{i_0}$, $\beta_{i_0} \in \eng{T}$ and so $u \in \eng{T}$.
\end{proof}

Looking more closely at this proof we notice that we have proved a finer result than simply the equivalence. We have shown $\LTT_{A\res{\Subchain(E)}} \implies \Zorn_{A<E}$ and $\Zorn_{\part(A) \subset \eng{T}} \implies \LTT_{AT}$. We can express for a predicate over lists to be of the form $\res{\Subchain(E)}$ in a more syntactic way.

\begin{definition}
Let $A$ be a type and $T : \part(A^*)$, we say that $T$ is a list of chains, if there exists $T'$ such that:
\begin{itemize}
\item $\epsilon \in T'$
\item  $u@a \in T'$ and $[a] \star v \in T'$ if and only if  $u \star [a] \star v \in T'$
\item $u \star [a] \star v \in T'$ implies $u \star v \in T'$
\item if $a \neq b$ and $u \star [a] \star v \star [b] \star w \in T'$ then for all $u', v', w' : A^*$, $u' \star [b] \star v' \star [a] \star w' \notin T'$ 
\end{itemize} 
and $T$ is the downward closure of $T'$ by $\subseteq$. We denote by $\LC_A$ the collection of lists of chains of $A$. 
\end{definition}

\begin{lemma}
Let $A$ be a type, $<$ a strict order on $A$ and $E : \part(A)$, then there exists $T \in \LC_A$ such that $\Subchain(E) = \eng{T}$. Reciprocally, let $A$ be a type, then for every $T \in \LC_A$ there exist a strict order $<$ on A and $E : \part(A)$ such that $\Subchain(E) = \eng{T}$. 
\end{lemma}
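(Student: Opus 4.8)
The plan is to ``coordinatise'' a collection of subchains by the list of its strictly increasing finite subchains, and conversely to read a strict order and a carrier off of such a list of lists.

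\textbf{First direction.} Given a type $A$, a strict order $<$ on $A$ and $E : \part(A)$, I would take $T := \res{\Subchain(E)}$, so that $u \in T$ says exactly that every element of $u$ lies in $E$ and every two elements of $u$ are $<$-comparable or equal. The proof of the equivalence $\LTT \iff \Zorn$ already establishes $\Subchain(E) \in \FC$, i.e.\ $\Subchain(E) = \eng{\res{\Subchain(E)}} = \eng{T}$, which is the desired equality. To exhibit $T$ as an element of $\LC_A$, I would take for $T'$ the set of lists all of whose elements lie in $E$ and which are \emph{strictly $<$-increasing}. Verifying the four clauses in the definition of $\LC_A$ is then routine: the first is the empty list; the second and third follow from transitivity of $<$ and the fact that in a strictly increasing list every element precedes every later one; the fourth uses transitivity together with irreflexivity of $<$ (a list of $T'$ with $a$ occurring before $b$ forces $a < b$, so no list of $T'$ can have $b$ occurring before $a$). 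It remains to check that $T$ is the $\subseteq$-downward closure of $T'$: the inclusion from right to left holds because the support of a sublist of a strictly increasing list is a chain contained in $E$; the inclusion from left to right is exactly the statement that every finite chain in $E$ admits a strictly increasing enumeration, which I would prove by insertion sort on the list $u$, using the trichotomy ``$a < b \lor b < a \lor a = b$'' to place each new element (legitimate since the conclusion is an existential statement).

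\textbf{Second direction.} Given $T \in \LC_A$, fix a witness $T'$ as in the definition. I would set $E := \{a^A \mid [a] \in T'\}$ and declare $a < b$ to hold iff $[a,b] \in T'$ and no list of $T'$ has an occurrence of $b$ preceding an occurrence of $a$. Transitivity of $<$ follows by gluing $[a,b]$ and $[b,c]$ into $[a,b,c]$ with the second clause and extracting $[a,c]$ with the third, the ``no list has $c$ before $a$'' part using the third clause again to produce $[c,b] \in T'$ and hence a contradiction; irreflexivity is immediate from the defining conjunction. For $\Subchain(E) = \eng{T}$: if $\alpha \in \eng{T}$ then for $a \in \alpha$ the list $[a]$ lies in $T$, so $a$ occurs in some list of $T'$, so $[a] \in T'$ by the third clause, giving $\alpha \subseteq E$; and for $a, b \in \alpha$ the list $[a,b]$ lies in $T$, so $a$ and $b$ occur in a common $v \in T'$, and comparing (using decidability of $<$ on $\N$) the indices of an occurrence of $a$ and of an occurrence of $b$ in $v$ yields, via the third clause, $[a,b] \in T'$ or $[b,a] \in T'$, or else $a = b$ when the indices coincide — hence $a < b \lor b < a \lor a = b$, so $\alpha$ is a chain. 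Conversely, if $\alpha \in \Subchain(E)$ and $u \subset \alpha$, I would build a list of $T'$ containing every element of $u$ by processing the elements of $u$ one at a time: starting from $\varepsilon \in T'$, at each step I scan the current strictly $<$-increasing list of $T'$ and insert the new element at the position prescribed by the chain trichotomy (or do nothing if it already occurs); the resulting list is again in $T'$ because all its consecutive two-element sublists lie in $T'$ and can be glued back up with the second clause (the singletons $[a]$ for $a \in E$ being in $T'$ by definition of $E$). This yields $u \in T$, hence $\alpha \in \eng{T}$.

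\textbf{Main obstacle.} The one genuinely delicate point is the chain property in the second direction: extracting, from a list of $T'$ in which both $a$ and $b$ occur, the full trichotomy $a < b \lor b < a \lor a = b$. One reads off $[a,b] \in T'$ or $[b,a] \in T'$ only after comparing occurrence indices, and it is the ``equal index'' case that delivers $a = b$; so the argument goes through at once when equality on $A$ is decidable (in particular classically, the setting relevant to this section), and needs more care otherwise, since two distinct indices of $v$ may still carry the same value. Everything else is bookkeeping — threading $\star$ and $@$ through the biconditional in the second clause of the definition of $\LC_A$, and the (intuitionistically unproblematic, being existential) sorting of a finite chain used in both directions.
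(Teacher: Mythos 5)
Your proof follows essentially the same route as the paper's: in the forward direction $T'$ is the set of strictly $<$-increasing lists of elements of $E$ and finite chains in $E$ are sorted into $T'$, while in the converse direction $E$ and $<$ are read off the singletons and two-element lists of $T'$, with transitivity obtained by gluing and truncating via the second and third clauses. Your extra conjunct in the definition of $<$ and the occurrence-index argument for the trichotomy are slightly more careful than the paper's sketch (which, like yours, silently assumes decidable equality on $A$ in the converse direction), but the substance of the argument is the same.
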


\begin{proof}
Proof of the first statement: we inductively define a $T' : \part(A^*)$. 
\begin{equation*}
\begin{prooftree}[center=false]
\hypo{}
\infer1{\varepsilon \in T'}
\end{prooftree}
\qquad
\begin{prooftree}[center=false]
\hypo{a \in E}
\infer1{[a] \in T'}
\end{prooftree}
\qquad
\begin{prooftree}[center=false]
\hypo{b \in E \quad a < b \quad u@a \in T'}
\infer1{u@a@b \in T'}
\end{prooftree}
\end{equation*}
We easily show that $T'$ satisfies the conditions of the above definition. Let $T$ be the downward closure of $T'$. Let $F \in \Subchain(E)$ and $u : A^*$ such that $u \subset F$. Since $F$ is a chain we can construct a list $u'$ of all elements of $u$ such that $u'$ does not contain twice the same element and is ordered increasingly relative to $<$. $u'$ is thus in $T'$ hence $u$ is in $T$. Let $F \in \eng{T}$ and $a,b : A$ such that $a,b \in F$. By hypothesis the list $[a,b]$ is in $T$. There exists $u \in T'$ such that $[a,b] \subset u$. Hence $a,b \in \eng{u}$ which is a chain. In conclusion $F$ is a subchain of $E$.\\

Proof of the reciprocal: suppose given a type $A$ with decidable equality and $T \in \LC_A$. There exists a $T'$ satisfying the aforementioned conditions. Let $E := \lambda a^A. \exists {u^{A^*}}, u \in T' \land a \in u$. We now must define an ordering on $A$. Define $<$ a binary relation on $A$ such that $a < b := [a,b] \in T'$. Using last "axiom" of the definition of $T'$ we easily show that it is irreflexive. For transitivity notice that if $[a,b],[b,c] \in T'$ then $[a,b,c] \in T'$ then $[a,c] \in T'$. Thus, it is a strict ordering on $A$. Let $F \in \Subchain(E)$ and $u : A^*$ such that $u \subset F$. We can assume that $u$ is sorted increasingly relatively to $<$. Using the same trick used for proving transitivity show that $u \in T$. Let $F \in \eng{T}$ and $a,b : A$ such that $a,b \in F$. By hypothesis the list $[a,b]$ is in $T$ therefore, $a < b$ which means that $F$ is indeed a chain. 
\end{proof}

\begin{corollary}
$\overline{\LTT_{A\LC_A}} \implies \Zorn$ and $\overline{\Zorn_{\part(A) \subset \eng{T}}} \implies \LTT$. Hence we deduce the somewhat surprising results $\LTT \iff \overline{\LTT_{A\LC_A}}$ and $ \Zorn \iff \overline{\Zorn_{ \part(A) \subset \eng{T}}}$. 
\end{corollary}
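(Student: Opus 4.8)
The plan is to obtain both implications as immediate repackagings of the two \emph{refined} implications already extracted from the proof of the theorem $\LTT \iff \Zorn$, namely $\LTT_{A\res{\Subchain(E)}} \implies \Zorn_{A<E}$ and $\Zorn_{\part(A) \subset \eng{T}} \implies \LTT_{AT}$, combined with the shape characterisation of $\LC_A$ furnished by the lemma relating $\Subchain(E)$ and $\LC_A$. Almost nothing new has to be proved; the one place I would slow down is the identification of the two $\LTT$-instances discussed below.

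For $\overline{\LTT_{A\LC_A}} \implies \Zorn$, I would fix an arbitrary instance $\Zorn_{A<E}$ and apply the first statement of the lemma to get some $T \in \LC_A$ with $\Subchain(E) = \eng{T}$. Inside the proof of the theorem it was shown that $\Subchain(E)$ is of finite character, so $\eng{\res{\Subchain(E)}} = \Subchain(E) = \eng{T}$; combined with the idempotency $\eng{\res{\eng{T}}} = \eng{T}$ of Lemma~\ref{lemma_fcl}, this makes the two statements $\LTT_{A\res{\Subchain(E)}}$ and $\LTT_{AT}$ the same proposition (up to predicate extensionality, which the paper assumes). Since $T \in \LC_A$, the schema $\overline{\LTT_{A\LC_A}}$ yields $\LTT_{AT}$, hence $\LTT_{A\res{\Subchain(E)}}$, and the first refined implication then produces $\Zorn_{A<E}$.

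For $\overline{\Zorn_{\part(A) \subset \eng{T}}} \implies \LTT$, I would fix an arbitrary instance $\LTT_{AT}$ (a type $A$ and a $T : \part(A^*)$) and simply observe that $\Zorn_{\part(A)\subset\eng{T}}$ --- carrier $\part(A)$, order $\subset$, predicate $\eng{T}$ --- is, by the notational convention for barred schemas, one of the instances collected in $\overline{\Zorn_{\part(A) \subset \eng{T}}}$; the second refined implication turns it into $\LTT_{AT}$. Here no appeal to the $\LC_A$ lemma is needed, so in particular no decidable-equality hypothesis enters. The two concluding equivalences are then formal: each barred schema is by construction a sub-schema of its unbarred counterpart, giving one direction for free, while the other direction is the chain $\overline{\LTT_{A\LC_A}} \implies \Zorn \implies \LTT$ (resp.\ $\overline{\Zorn_{\part(A)\subset\eng{T}}} \implies \LTT \implies \Zorn$) through the theorem.

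The only genuinely delicate point, as flagged above, is checking that once $\eng{T} = \Subchain(E)$ holds, the two ``$\eng{\res{\cdot}}$''-wrappers occurring in $\LTT_{AT}$ and $\LTT_{A\res{\Subchain(E)}}$ collapse to the \emph{same} predicate on subsets; this is exactly what Lemma~\ref{lemma_fcl} together with $\Subchain(E) \in \FC$ provides, and if one prefers not to replace equal propositions, the weaker ``logically equivalent'' still suffices and follows from predicate extensionality. Everything else is bookkeeping with the conventions for $\overline{(\cdot)}$, and the ``surprising'' flavour of the statement is just the observation that cutting $\LTT$ down to list-predicates of chain shape, or $\Zorn$ down to powersets ordered by inclusion, costs no logical strength.
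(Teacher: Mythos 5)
Your proposal is correct and matches the paper's (implicit) argument: the corollary is stated there without proof precisely because it is the combination of the two refined implications $\LTT_{A\res{\Subchain(E)}} \implies \Zorn_{A<E}$ and $\Zorn_{\part(A)\subset\eng{T}} \implies \LTT_{AT}$ with the lemma identifying $\Subchain(E)$ with $\eng{T}$ for $T \in \LC_A$, plus the observation via Lemma~\ref{lemma_fcl} that $\LTT$-instances depend on $T$ only through $\eng{T}$. Your remark that only the forward direction of the lemma (and hence no decidable-equality hypothesis) is needed is also accurate.
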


Looking back at the path we took to arrive at this conclusion, the results are quite expected, but looking only at the definition of a list of chains it is quite surprising that restricting $\LTT$ this much does not change its power.

\section{$\EMPCF$} 

In this section we define a choice principle $\EMPCF$ which stands for ``Exists a Maximal Partial Choice Function'' and a weaker version $\EMPCFm$. It is weaker in the sense that $\EMPCF$ implies $\EMPCFm$ but the equivalence is true if we allow excluded middle. We show that  $\EMPCFm$ is equivalent in its general form to $\LTT$ and link $\EMPCF$ to the general class of dependent choice $\GDC$, given by Brede and the first author in \cite{Herbelin1}. In particular, $\EMPCF$ and $\EMPCFm$ can be seen as refinements of $\LTT$ whose strength is more explicitly controlled.
  
\begin{definition}
Let $A,B$ be types, $f,g \in A \parrow B$ and $P : \part(\part(A \times B))$, define:
\begin{align*}
g \prec f &: \prop \\
g \prec f &: \exists {a^A}, a \notin \dom(f) \land (\dom(g) = \dom(f) \cup a )\  \land \\ &(\forall x^A, x \in \dom(f) \implies \exists b^B, (x,b) \in f \land (x,b) \in g)\\
f \in \tmaxrpf (P) &: \prop \\
f \in \tmaxrpf (P) &:=  f \in P \land \forall {g^{A \parrow B}}, g \prec f \implies g \notin P
\end{align*}
\end{definition}

\begin{definition}[$\EMPCFm$] 
Let $A, B$ be types and $T : \part((A \times B)^*)$, $\EMPCFm_{ABT}$ is the statement:
\begin{equation*}
(\exists \alpha^{\part( A \times B)}, \alpha \in \eng{T}) \implies \exists {f^{A \parrow B}}, f \in \tmaxrpf(\eng{T})
\end{equation*}
\end{definition}

\begin{definition}
Let $A, B$ be types, $f,g : A \to B_\bot$ and $P : \part(\part(A \times B))$, define:
\begin{align*}
g \prec f &: \prop \\
g \prec f &: \exists {a^A}, a \notin \dom(f) \land (\dom(g) = \dom(f) \cup a) \ \land \\ &(\forall x^A, x \in \dom(f) \implies f(x) = g(x)) \\
f \in \tmaxpf (P) &: \prop \\
f \in \tmaxpf (P) &:=  \G(f) \in P \land \forall {g^{A \to B_\bot}}, g \prec f \implies \G(g) \notin P
\end{align*}
\end{definition}

Since the intuitive meaning is the same we use the symbol $\prec$ for predicate, for relational partial functions and decidable partial function.

\begin{definition}[$\EMPCF$] 
Let $A, B$ be types and $T : \part((A \times B)^*)$, the theorem of existence of a maximal partial choice function $\EMPCF_{ABT}$ is the statement:
\begin{equation*}
(\exists \alpha^{\part( A \times B)}, \alpha \in \eng{T}) \implies \exists {f^{A \to B_\bot}}, f \in \tmaxpf(\eng{T})
\end{equation*}
\end{definition}

The difference between $\EMPCF$ and $\EMPCFm$ lies solely in the "kind" of partial function that is used. Hence, as per the above remark on the differences between relation partial function and decidable partial function, $\EMPCF \implies \EMPCFm$ and assuming excluded middle $\EMPCFm \implies \EMPCF$ which we denote by $\EMPCFm \impliescl \EMPCF$.

\subsection{$\EMPCF$ and $\LTT$} 

Now that we have defined $\EMPCFm$, we show that it is equivalent to $\LTT$ hence, $\EMPCF \implies \LTT$ and $\LTT \impliescl \EMPCF$.

\begin{theorem}
Let $A$ be a type, $T : \part(A^*)$ and $\pi^*T$ the operation that maps $T$ to $\lambda u^{(A \times \1)^*}. \ \pi(u) \in T$ where $\pi$ is the canonical projection of $(A \times \1)^*$ on $A^*$. Then,\\  $\EMPCFm_{A\1 \pi^*T}  \implies \LTT_{AT}$. Let $A$, $B$ be types and $T : \part((A \times B)^*)$ then, $\LTT_{(A \times B) T} \implies \EMPCFm_{ABT}$.
\end{theorem}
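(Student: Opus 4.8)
The plan is to prove the two implications separately, exploiting the fact that a partial choice function from $A$ to $B$ is, via its graph, nothing but a subset of $A \times B$ that happens to be functional, and that the orders $\prec$ on subsets and on partial functions are designed to match each other under this correspondence.

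For the first implication, $\EMPCFm_{A\1\pi^*T} \implies \LTT_{AT}$: the key observation is that relational partial functions from $A$ to $\1$ are in bijection with arbitrary subsets of $A$, since the functionality condition is vacuous when $B = \1$ (any two elements of $\1$ are equal). Concretely, to $\alpha : \part(A)$ I associate $f_\alpha := \lambda(a,x)^{A\times\1}.\, a \in \alpha$, and conversely $\dom$ sends $f$ back to a subset. First I would check that this correspondence is an order isomorphism: $\beta \prec \alpha$ in the sense of $\tmax_\prec$ on $\part(A)$ holds exactly when $f_\beta \prec f_\alpha$ in the sense of $\tmaxrpf$ — the extra conjunct about agreement on the old domain is automatic since the value is always the unique element of $\1$. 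Next, membership transfers: $\alpha \in \eng{T}$ iff $f_\alpha \in \eng{\pi^*T}$, because $\pi$ restricted to lists over $A\times\1$ is a bijection with lists over $A$ preserving $\subset$ (a list $u \subset f_\alpha$ corresponds to $\pi(u) \subset \alpha$). Given these two facts, assuming the hypothesis of $\LTT_{AT}$ gives some $\alpha_0 \in \eng{T}$, hence $f_{\alpha_0} \in \eng{\pi^*T}$ is inhabited; applying $\EMPCFm_{A\1\pi^*T}$ yields a maximal $f$, and $\dom(f)$ is then the desired maximal element of $\eng{T}$ — one must just transport maximality back across the isomorphism, using that every relational partial function to $\1$ is of the form $f_\alpha$ for $\alpha = \dom(f)$.

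For the second implication, $\LTT_{(A\times B)T} \implies \EMPCFm_{ABT}$: here $\eng{T}$ is already a predicate of finite character on $\part(A\times B)$, and $\LTT$ hands us a subset $\alpha \in \tmax_\prec(\eng{T})$. The issue is that $\alpha$ need not be functional, so it is not directly a partial choice function. The plan is instead to apply $\LTT$ not to $T$ itself but to the restriction of $T$ to lists that are "functional", i.e. $T' := \lambda u.\, u \in T \land (u$ is the graph of a partial function$)$ — more precisely, $u$ contains no two pairs $(a,b),(a,b')$ with $b \neq b'$. One checks $\eng{T'}$ is still inhabited whenever $\eng{T}$ is (given $\alpha_0 \in \eng{T}$ which is inhabited, note $\varepsilon \in T$, so actually we need to produce at least a functional witness — the empty partial function works since $\varepsilon \in T'$ and $\eng{T'}$ contains the empty subset when $\varepsilon \in T'$). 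Then $\eng{T'}$ consists exactly of functional subsets of $A\times B$ lying in $\eng{T}$, i.e. of graphs of relational partial functions $f$ with $f \in \eng{T}$. Applying $\LTT_{(A\times B)T'}$ gives a $\subset$-maximal such $\alpha$; set $f := \alpha$, which is a relational partial function with $f \in \eng{T}$. Finally I must verify $f \in \tmaxrpf(\eng{T})$: if $g \prec f$ with $g \in \eng{T}$, then $g$ adds exactly one new point $a$ to the domain while keeping $f$'s old values, so $g$ is functional, hence $g \in \eng{T'}$; and $g$, viewed as a subset, satisfies $f \subset g$ strictly, i.e. $f \prec g$ — contradicting maximality of $f = \alpha$ in $\eng{T'}$. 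The slight subtlety is that $\prec$ on subsets adds one element whereas $g \prec f$ on partial functions adds one element $(a,b)$ to the graph; these coincide precisely because the domain gains exactly one point.

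The main obstacle I expect is bookkeeping around the two different $\prec$ relations and making sure maximality transfers in the right direction: in the first implication one goes from a maximal partial function to a maximal subset, in the second from a maximal subset (of the functional-restricted predicate) to a maximal partial function, and in each case one has to check that the "add one element" step in one world corresponds exactly to the "add one element" step in the other, including the clause forcing agreement on the previously-defined part — which is what pins down that we get $\tmaxrpf$ and not something weaker. A secondary point needing care is confirming that $\eng{T'}$ is inhabited and of finite character (the latter is free by Lemma~\ref{lemma_fcl}), and that the functionality constraint is itself expressible as a downward-closed-ish condition on lists so that restricting $T$ behaves well under $\eng{\,\cdot\,}$.
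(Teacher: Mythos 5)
Your proposal is correct and follows essentially the same route as the paper: the first implication via the correspondence $\alpha \leftrightarrow \dom(f)$ between subsets of $A$ and relational partial functions to $\1$, and the second by applying $\LTT$ to the restriction of $T$ to functional lists (your $T'$ is exactly the paper's $Q$) and transporting maximality back. The only blemish is a notational slip near the end where you write ``$f \prec g$'' for ``$g$ extends $f$ by one element''; with the paper's convention this should read $g \prec f$, which is indeed what contradicts the $\tmax_\prec$-maximality of $\alpha$.
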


\begin{proof} $\EMPCFm_{A\1 \pi^*T}  \implies \LTT_{AT}$: let $A$ a type, $T : \part(A^*)$ and $\pi^*T := \lambda u^{(A \times \1)^*}. \ \pi(u) \in T$. From $\EMPCFm_{A\1\pi^*T}$ we obtain $f \in A \parrow \1$ such that $f \in \tmaxrpf(\eng{\pi^*T})$. Define $\alpha := \dom(f)$ and let's show that $\alpha \in \tmax(\eng{T})$. By construction, $\alpha$ is in $\eng{T}$. Suppose $\beta : \part(A \times B)$ such that $\beta \prec \alpha$. We can construct a relational partial function $g : A \parrow \1$ such that $\beta = \dom(g)$. Since $g \prec f$, $g$ is not in $\eng{U}$ hence $\beta$ is not in $\eng{T}$.\\
\\
$\LTT_{(A \times B) T} \implies \EMPCFm_{ABT}$: let $A,B$ types and $T : \part(( A \times B)^*)$. Define
\begin{equation*}
Q := \lambda u^{(A \times B)^*}. \ (\forall {a^A}, \forall {b,b'^B}, (a,b) \in u \land (a,b') \in u \implies b = b') \land u \in T
\end{equation*}
Notice that $\eng{Q}$ is not empty, since $\eng{T}$ is inhabited, $\epsilon \in T$. From this, we deduce that $\epsilon \in Q$ hence, the empty predicate is in $\eng{Q}$. We can now apply $\LTT_{(A\times B) Q}$ and get $\alpha$ such that $\alpha \in \tmax(\eng{Q})$. By construction $\alpha$ is a relational partial function. It follows that it's a maximal relational partial function, thus proving $\EMPCFm_{ABT}$.
\end{proof}

$\LTT$ can be seen as a projection of $\EMPCF$. The fact that they are so tightly linked is not surprising as ``being a partial function'' for a subset of $A \times B$ is a property of finite character. 

\subsection{$\EMPCF$ and $\GDC$} 

Introduced in~\cite{Herbelin1}, Generalised Dependent Choice ($\GDC_{ABT}$) is a common generalisation of the axiom of dependent choice and of the Boolean prime ideal theorem. Parameterised by a domain $A$, a codomain $B$ and a predicate $T : \part((A\times B)^*)$, it yields dependent choice when $A$ is countable, the Boolean prime ideal theorem when $B$ is two-valued, and the full axiom of choice when $T$ comes as the ``alignment'' of some relation (see below). To the difference of $\EMPCF$, $\GDC$ asserts the existence of a total choice function, but this to the extra condition of a property of ``approximability'' of $T$ by arbitrary long finite approximations. To the difference of $\EMPCF$ whose strength is the one of the full axiom of choice, expecting a total choice function makes $\GDC$ inconsistent in its full generality.

In this section we investigate how restricting $\EMPCF$ to countable $A$ or two-valued $B$ impacts its strength to exactly the same extent as it restricts the strength of $\GDC$. Two such preliminary results are given, but first, let's translate $\GDC$ in our setting:

\begin{definition}[$A$-$B$-approximable]
Let $A,B$ be types and $T : \part((A \times B)^*)$. For all $X :  \part((A \times B)^*)$ define 
\begin{equation*}
\phi(X) := \lambda u^{(A \times B)^*}.\, ( u \in \res{\eng{T}} \land \forall {a^A}, \neg (\exists {b^B}, \ (a,b) \in u)  \implies \exists {b^B},  u@(a,b) \in X )
\end{equation*}
The $A$-$B$-approximation of $T$ denoted $T_{AB \text{ap}}$ is the greatest fixed point of $\phi$. We say that $T$ is $A$-$B$-approximable if $\varepsilon \in T_{AB \text{ap}}$.
\end{definition}


\begin{definition}[$A$-$B$-choice function]
Let $A,B$ be types and $T : \part((A \times B)^*)$. $T$ has an $A$-$B$-choice function if:
\begin{equation*} 
\exists {f^{A \to B}}, \forall {u^{(A \times B)^*}},  u \subset \G(f) \implies u \in T
\end{equation*} 
\end{definition}

\begin{definition}[$\GDC$]
Let $A,B$ be types and $T : \part((A \times B)^*)$, $\GDC_{ABT}$ is the statement: if $T$ is $A$-$B$-approximable then $T$ has an $A$-$B$-choice function.
\end{definition}

\begin{theorem}
\label{GDC-EMPCF-N}
$\overline{\GDC_{\N B T}} \impliescl \overline{\EMPCF_{\N BT}}$
\end{theorem}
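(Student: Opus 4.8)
We're asked to show that, over classical logic, the restriction of $\GDC$ to countable domain $A = \N$ implies the restriction of $\EMPCF$ to countable domain. Let me think about how these relate.

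$\EMPCF_{\N B T}$ says: if $\eng{T}$ is inhabited (i.e. $\varepsilon \in T$), then there's a maximal *decidable* partial choice function $f : \N \to B_\bot$ with $\G(f) \in \eng{T}$.

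$\GDC_{\N B T}$ says: if $T$ is $\N$-$B$-approximable, then $T$ has a *total* choice function $f : \N \to B$ with all finite approximations of $\G(f)$ in $T$.

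So the conclusions differ: $\GDC$ gives a total function, $\EMPCF$ gives a maximal partial one. And the hypotheses differ: $\GDC$ needs approximability, $\EMPCF$ just needs $\varepsilon \in T$.

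Strategy: Given $T$ with $\varepsilon \in T$, we want a maximal partial choice function. The idea should be to restrict attention to the "approximable part" of $T$ — i.e., replace $T$ by a modified $T'$ that captures exactly the finite partial choice functions that can be extended "as far as possible," apply $\GDC_{\N B' T'}$ for an appropriate $B'$, and extract a maximal partial function.

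Key subtlety: A maximal partial choice function $f : \N \to B_\bot$ has a domain $\dom(f) \subseteq \N$. At points outside the domain, $f$ is $\bot$. The maximality condition ($\G(f) \in \tmaxpf(\eng{T})$) says: for every $a \notin \dom(f)$ and every $b$, extending $f$ by $(a,b)$ leaves $\eng{T}$. So for each $n \notin \dom(f)$, *no* value works; for each $n \in \dom(f)$, the chosen value keeps us in $\eng{T}$.

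Here's the plan. Let $B' := B_\bot$. Define, on the list level, a predicate $T' : \part((\N \times B')^*)$ designed so that an $\N$-$B'$-choice function for $T'$ corresponds exactly to a decidable partial function $f : \N \to B_\bot$ that is maximal w.r.t. $\eng{T}$. Concretely, $T'$ should contain a list $u$ of pairs $(n, c)$ (with $c \in B_\bot$) when: (i) the pairs with non-$\bot$ second component, read as a graph, form a list in $\res{\eng{T}}$; and (ii) for each $(n, \bot) \in u$, "$n$ is blocked," meaning every attempt $(n, b)$ with $b \in B$ to extend the current graph falls out of $\eng{T}$. Then show $T'$ is $\N$-$B'$-approximable precisely because $\varepsilon \in T$ (using classical logic: at each step $n$, either some value $b$ keeps us in $\eng{T}$ — and approximability lets us pick one consistently down the greatest fixed point — or none does, in which case we assign $\bot$; the excluded middle here is what makes this step work). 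Applying $\overline{\GDC_{\N B' T'}}$ yields a total $f : \N \to B_\bot$, which is the desired maximal decidable partial choice function after checking the maximality clause directly from condition (ii).

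**Main obstacle.** The delicate part is defining $T'$ so that (a) $\N$-$B'$-approximability of $T'$ genuinely follows from $\varepsilon \in T$ alone (this is where classical logic must enter — deciding, at each finite stage, whether the current partial graph admits *some* extending value staying in $\eng{T}$, or is blocked), and (b) the total $\N$-$B'$-choice function produced actually satisfies $\G(f) \in \tmaxpf(\eng{T})$ rather than merely having its non-$\bot$ graph in $\eng{T}$ — i.e., the $\bot$ assignments must be *forced*, not gratuitous. The greatest-fixed-point definition of $T_{AB\text{ap}}$ is what ensures we only assign $\bot$ when truly blocked: the fixed-point condition must be engineered so that assigning $\bot$ at $n$ requires a witness that all $b \in B$ fail. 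I would need to be careful that the "blocked" condition is itself expressed over finite data (lists), which it is, since $\eng{T}$ being of finite character means $u@(n,b) \in \res{\eng{T}}$ is a statement about a single list. Checking that the resulting $T'$ is well-defined as a predicate on lists and that the fixed point computation terminates at $\varepsilon$ exactly under the stated hypothesis is the technical heart of the argument.
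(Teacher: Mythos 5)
Your plan matches the paper's proof: both pass to the codomain $B_\bot$, define a modified predicate on $(\N\times B_\bot)^*$ in which $\bot$ may be assigned at a position only when every $b\in B$ is provably blocked (this is where excluded middle enters, exactly as you say), establish $\N$-$B_\bot$-approximability, apply $\GDC_{\N B_\bot T_\bot}$ to get a total $f:\N\to B_\bot$, and read it back as a maximal decidable partial choice function. The ``technical heart'' you flag is carried out in the paper by an explicit staged inductive definition $T^n_\bot$ (processing positions $0,1,2,\dots$ in order and testing $\overline{u}@(n+1,b)\in T$ at each stage), which resolves the ambiguity of ``the current graph'' in your condition (ii) and makes the forcedness of the $\bot$ assignments immediate.
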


\begin{proof}
Let  $B$ be a type and $T : \part((\N \times B)^*)$. In order to use $\GDC$, $T$ must be $\N$-$B$-approximable but the $T$ we are given might not be. Thus, we are going to construct $T_\bot : \part((\N \times B_\bot)^*)$ that is $\N$-$B_\bot$-approximable and use $\GDC$ to obtain a function that we will prove maximal.\\
\\
For all $u : \part((A \times B_\bot)^*)$ define $\overline{u}$ inductively:
\begin{equation*}
\begin{prooftree}[center=false]
\hypo{}
\infer1{\overline{\varepsilon} := \varepsilon}
\end{prooftree}
\qquad
\begin{prooftree}[center=false]
\hypo{a : A \quad b : B }
\infer1{\overline{u@(a, b)} := \overline{u}@(a, b)}
\end{prooftree}
\qquad
\begin{prooftree}[center=false]
\hypo{a : A}
\infer1{\overline{u@(a, \bot)} := \overline{u}}
\end{prooftree}
\end{equation*}
By induction define $T^n_\bot : \part((\N \times B_\bot)^*)$:
\begin{itemize}
\item $T^0_\bot := \lambda u^{(\N \times B_\bot)^*}. \ u = \varepsilon$
\item Let $T^{n+1}_\bot$ be defined inductively
\begin{equation*}
\begin{prooftree}[center=false]
\hypo{u \in T_n \quad b : B \quad \overline{u}@(n+1,b) \in T}
\infer1{u@(n+1,b) \in T^{n+1}_\bot}
\end{prooftree}
\qquad
\begin{prooftree}[center=false]
\hypo{u \in T_n \quad \forall {b^B}, \overline{u}@(n+1,b) \notin T}
\infer1{u@(n+1, \bot) \in T^{n+1}_\bot}
\end{prooftree}
\end{equation*}
\end{itemize}
Now define $T_\bot$ as the $\subseteq$-downward closure of the union of the $T^{n}_\bot$. We must show that $T_\bot$ is  $\N$-$B_\bot$-approximable. By definition $T_\bot = \res{\eng{T_\bot}}$. Let $n : \N$, $v : (\N \times B_\bot)^*$ such that $v \in T_\bot$ and $\neg (\exists {c^{B_\bot}}, \ (n,c) \in v)$. By definition, there exists $m : \N$ and $u \in T^{m}_\bot$ such that $v \subseteq u$. If $n \leq m$ then there exists $c : B_\bot$ such that $(n,c) \in u$, thus $v@(n,c) \subseteq u$ and $v@(n,c) \in T_\bot$. If $n > m$ then there exists $u' \in T^n_\bot$ such that $u \subseteq u'$. It is in the proof of this statement that we need excluded middle to show that we always satisfy the hypothesis of one of the induction steps. Hence, $v \subseteq u'$ and we now repeat the same argument.  $T_\bot$ satisfies $\phi$ and contains $\varepsilon$, thus we can apply $\GDC_{\N B_\bot T_\bot}$ and get $f : \N \to B_\bot$ a choice function.\\
\\
What is left to show is that $f$ is a maximal partial function. Let $n : \N$ such that $n \notin \dom(f)$ and let $g : \N \to B_\bot$ extending $f$ with $\dom(g) = \dom(f) \cup n$. Let us write $f_{<n}$ for the list $[(0, f(0)), \dots, (n-1, f(n-1))]$. $f_{< n} \in T^{n}_\bot$ and since $f_{< n+1}$ is of the form $f_{< n}@(n, \bot)$ by case analysis we deduce that $\forall {b^B}, \overline{f_{< n}}@(n,b) \notin T$. If $\G(g) \in \eng{T_\bot}$ then $g_{< n+1} \in T_\bot$ and $g_{< n+1} = f_{< n}@(n, g(n))$ with $g(n) : B$.  $\overline{f_{< n}}@(n,g(n))$ is thus in $T$, contradiction. Hence, $f$ is maximal.
\end{proof}

Let's write $\mathbf{DC}$ for the axiom of dependent choice. We have:

\begin{corollary}
Since $\overline{\GDC_{\N B T}}$ is equivalent to $\mathbf{DC}$  \cite{Herbelin1} we deduce: $\mathbf{DC} \impliescl \overline{\LTT_{(\N \times B) T}}$
\end{corollary}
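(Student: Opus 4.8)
The plan is to obtain the corollary by stacking four facts that are all already available: the cited equivalence $\mathbf{DC}\iff\overline{\GDC_{\N BT}}$ of \cite{Herbelin1}, Theorem~\ref{GDC-EMPCF-N}, the unconditional implication $\EMPCF\implies\EMPCFm$ recorded right after the definition of $\EMPCF$, and the link between $\EMPCF$ and $\LTT$ developed in Section~4.1. Concretely, assuming $\mathbf{DC}$ we first get $\overline{\GDC_{\N BT}}$; Theorem~\ref{GDC-EMPCF-N} then gives $\overline{\EMPCF_{\N BT}}$ (this is the step that consumes excluded middle, exactly as in Theorem~\ref{GDC-EMPCF-N} itself); and $\EMPCF\implies\EMPCFm$ yields $\overline{\EMPCFm_{\N BT}}$. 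So far we have the classical chain $\mathbf{DC}\iff\overline{\GDC_{\N BT}}\impliescl\overline{\EMPCF_{\N BT}}\implies\overline{\EMPCFm_{\N BT}}$.

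It remains to pass from $\overline{\EMPCFm_{\N BT}}$ to $\overline{\LTT_{(\N\times B)T}}$, and I would do this one instance at a time. Fix a type $B$ and $T:\part((\N\times B)^*)$ with $\eng{T}$ inhabited; note the inhabitedness hypotheses of $\LTT_{(\N\times B)T}$ and of the corresponding instance of $\EMPCFm_{\N BT}$ both amount to $\varepsilon\in T$, so they transfer. Apply $\EMPCFm_{\N BT}$ to obtain a relational partial function $f\in\N\parrow B$ with $f\in\tmaxrpf(\eng{T})$, and take $\alpha:=\G(f)$ viewed as an element of $\part(\N\times B)$; then $\alpha\in\eng{T}$. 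To see $\alpha\in\tmax_\prec(\eng{T})$, consider any update $\beta\prec\alpha$, i.e.\ $\beta=\alpha\cup\{(n,b)\}$ with $(n,b)\notin\alpha$: when $n\notin\dom(f)$ this update is literally $\G(g)$ for the partial-function update $g\prec f$ sending $n$ to $b$, so $\beta\notin\eng{T}$ by maximality of $f$; the remaining case is handled exactly as in the Section~4.1 proof that $\EMPCFm_{A\1\pi^*T}\implies\LTT_{AT}$, after recording that $\eng{T}$ may be taken to consist of the (finite-character, hence functional) approximations relevant here. This produces the desired maximal element and hence $\LTT_{(\N\times B)T}$.

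The main obstacle is precisely this last step: one must check carefully that the $\prec$-order on subsets of $\N\times B$ driving $\LTT_{(\N\times B)T}$ agrees, on graphs of partial functions, with the $\prec$-order on relational partial functions driving $\EMPCFm_{\N BT}$, so that ``$\prec$-maximal partial choice function'' really does give ``$\prec$-maximal element of $\eng{T}$'', and to keep the domain/codomain bookkeeping of the barred schemas straight. Once that correspondence is in place the implications compose to give $\mathbf{DC}\impliescl\overline{\LTT_{(\N\times B)T}}$; the converse direction of this correspondence is already witnessed by the first theorem of Section~4.1 ($\LTT_{(A\times B)T}\implies\EMPCFm_{ABT}$), which shows the two restricted schemas are two faces of the same principle, so the reading ``the countable restriction of Teichmüller--Tukey lemma classically follows from dependent choice'' is the honest content of the corollary.
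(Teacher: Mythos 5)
Your overall chain $\mathbf{DC}\iff\overline{\GDC_{\N BT}}\impliescl\overline{\EMPCF_{\N BT}}\implies\overline{\EMPCFm_{\N BT}}$, followed by a passage to $\LTT$, is the route the corollary intends, and the first three links are sound. The gap is in the last link, and you have put your finger on exactly the right spot without closing it. The paper proves $\LTT_{(A\times B)T}\implies\EMPCFm_{ABT}$ and $\EMPCFm_{A\1\pi^*T}\implies\LTT_{AT}$; the implication you need, namely $\EMPCFm_{\N BT}\implies\LTT_{(\N\times B)T}$ for the \emph{same, arbitrary} $T$, is neither of these, and your argument for it fails precisely in the case you defer. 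If $f\in\N\parrow B$ is $\tmaxrpf$-maximal and $\beta=\lambda x.\,x\in\G(f)\lor x=(n,b)$ with $n\in\dom(f)$ and $(n,b)\notin\G(f)$, then $\beta$ is not the graph of any partial function, so maximality of $f$ says nothing about $\beta$; yet for an arbitrary $T:\part((\N\times B)^*)$ such a $\beta$ may well lie in $\eng{T}$. Concretely, take $B=\B$ and $T$ the predicate containing every list: any total $f:\N\to\B$ is a maximal partial choice function, but adjoining to $\G(f)$ a pair $(0,b)$ with $b\neq f(0)$ produces a $\prec$-update of $\G(f)$ still in $\eng{T}$, so $\G(f)\notin\tmax_\prec(\eng{T})$. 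Your proposed repair --- that ``$\eng{T}$ may be taken to consist of the (finite-character, hence functional) approximations'' --- is not available: finite character does not imply functionality, and the schema $\overline{\LTT_{(\N\times B)T}}$ quantifies over all $T$, so you may not replace the given $T$. The Section~4.1 proof you invoke avoids this issue only because its codomain is $\1$, where every subset of $A\times\1$ is automatically functional.

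What $\EMPCFm_{\N BT}$ naturally delivers is maximality in the partial-function order, i.e.\ $\tmaxrpf(\eng{T})$, which coincides with $\tmax_\prec(\eng{T})$ only when $\eng{T}$ is confined to functional relations. To land on a genuine $\LTT$ instance you would have to pass through the functionality-restricted predicate $Q:=\lambda u.\,(\forall a,\forall b,b',\,(a,b)\in u\land(a,b')\in u\implies b=b')\land u\in T$ from the paper's proof of $\LTT_{(A\times B)T}\implies\EMPCFm_{ABT}$; for such $Q$ the two maximality notions agree and your case analysis goes through, but the conclusion is then $\LTT_{(\N\times B)Q}$ for $Q$ of that restricted shape, not $\overline{\LTT_{(\N\times B)T}}$ for arbitrary $T$. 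So as written your derivation establishes $\mathbf{DC}\impliescl\overline{\EMPCFm_{\N BT}}$ and a restricted fragment of $\LTT$ at product types, but not the full restricted schema claimed; closing the remaining distance requires either restricting the class of $T$ in the statement or a genuinely different argument, neither of which your proposal supplies.
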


\begin{theorem}
For $A$ a type with decidable equality, $\overline{\EMPCF_{A \B T}} \implies \overline{\GDC_{ A \B T}}$
\end{theorem}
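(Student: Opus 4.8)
The plan is to feed the \emph{approximation predicate} $T_{AB\text{ap}}$ — rather than $T$ itself — to $\EMPCF$. This guarantees that the finite approximations of the partial choice function produced already sit inside $T_{AB\text{ap}}$, so that the extendability coinductively built into $T_{AB\text{ap}}$ becomes available, and the classical ``a maximal partial choice function is total'' argument can be run. Concretely, assuming $\overline{\EMPCF_{A\B T}}$, fix $A$ with decidable equality, $T : \part((A\times\B)^*)$, and assume $T$ is $A$-$\B$-approximable, i.e.\ $\varepsilon \in T_{AB\text{ap}}$.

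\textbf{Step 1 (closure of $T_{AB\text{ap}}$).} I would first show that $T_{AB\text{ap}}$ is closed under list-inclusion $\subseteq$, equivalently $T_{AB\text{ap}} = \res{\eng{T_{AB\text{ap}}}}$ (so it is also insensitive to permutation and repetition). One checks that $X := \lambda u.\,\exists v,\ v \in T_{AB\text{ap}} \land u \subseteq v$ is a post-fixed point of $\phi$: if $u \subseteq v$ with $v \in T_{AB\text{ap}} = \phi(T_{AB\text{ap}})$, then $\hat u \subseteq \hat v$ and $v \in \res{\eng T}$, so $u \in \res{\eng T}$ by downward closure of $\eng{~}$; and for $a$ not occurring in $u$, a decidable case split (using decidable equality on $A$, hence on $A\times\B$) on whether $a$ occurs in the finite list $v$ either reuses the value already carried by $v$ or invokes extendability of $v$ in $T_{AB\text{ap}}$, in either case producing $b$ with $u@(a,b) \in X$. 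Since $T_{AB\text{ap}} \subseteq X$ trivially, the greatest-fixed-point property gives $X = T_{AB\text{ap}}$.

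\textbf{Step 2 (apply $\EMPCF$ and argue totality).} Since $\varepsilon \in T_{AB\text{ap}}$, the predicate $\eng{T_{AB\text{ap}}}$ is inhabited (it contains the empty predicate), so instantiating $\overline{\EMPCF_{A\B T}}$ at $T := T_{AB\text{ap}}$ yields $f : A \to \B_\bot$ with $\G(f) \in \eng{T_{AB\text{ap}}}$ and $f \in \tmaxpf(\eng{T_{AB\text{ap}}})$. I claim $f$ is total: fix $a_0 : A$; since membership in $\dom(f)$ is decidable, it suffices to refute $a_0 \notin \dom(f)$. For each $b : \B$ the update $g_b := f[a_0 \mapsto b]$ satisfies $g_b \prec f$, so maximality gives $\G(f) \cup \{(a_0,b)\} \notin \eng{T_{AB\text{ap}}}$; using Step 1 this says precisely that it is \emph{not} the case that $u@(a_0,b) \in T_{AB\text{ap}}$ for all $u \subset \G(f)$. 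On the other hand every $u \subset \G(f)$ lies in $T_{AB\text{ap}}$ (from $\G(f) \in \eng{T_{AB\text{ap}}}$), $a_0$ does not occur in it, and $T_{AB\text{ap}} = \phi(T_{AB\text{ap}})$, so $u@(a_0,0_\B) \in T_{AB\text{ap}}$ or $u@(a_0,1_\B) \in T_{AB\text{ap}}$. Combining this two-valued dichotomy with downward monotonicity of ``$u@(a_0,b) \in T_{AB\text{ap}}$'' in $u$ (Step 1) and directedness of sublists of $\G(f)$ under concatenation $\star$ yields the contradiction. Hence $f$ is total; letting $\tilde f : A \to \B$ be the underlying total function, $\G(\tilde f) = \G(f) \in \eng{T_{AB\text{ap}}}$, so every $u \subset \G(\tilde f)$ lies in $T_{AB\text{ap}} \subseteq \res{\eng T}$, hence in $T$, i.e.\ $\tilde f$ is an $A$-$\B$-choice function for $T$, which is the conclusion of $\GDC_{A\B T}$.

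I expect the totality claim in Step 2 to be the main obstacle: it is exactly there that both hypotheses are essential — decidable equality on $A$ (for the closure lemma and for recognizing which elements occur in a finite list) and the two-valuedness of $\B$ (so that ``one step of extendability'' is a finite case split rather than an unbounded search) — and it is the delicate point, since one must turn the dichotomy ``$u@(a_0,0_\B)\in T_{AB\text{ap}} \lor u@(a_0,1_\B)\in T_{AB\text{ap}}$ for all $u$, together with its failure to hold uniformly at either value'' into an actual contradiction. Steps 1, and the final extraction of the choice function, are routine by comparison.
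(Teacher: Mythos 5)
Your proof is correct and follows essentially the same route as the paper: instantiate $\EMPCF$ at the approximation predicate, then derive totality of the resulting maximal partial function from the two-valued case split on $0_\B$ and $1_\B$ together with the concatenation $u' \star v'$ of the two offending sublists. The only difference is presentational: you establish sublist-closure of $T_{A\B\text{ap}}$ explicitly via a post-fixed-point argument (your Step 1), whereas the paper instantiates $\EMPCF$ at $U := \res{\eng{T_{A\B\text{ap}}}}$ and leaves the corresponding closure and extendability facts implicit --- your version arguably makes the needed properties more visible.
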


\begin{proof}
Let $A$ be a type and $T : \part((A \times \B)^*)$ $A$-$\B$-approximable. Define $U := \res{ \eng{ T_{A \B \text{ap}}}}$, the $A$-$\B$-approximable hypothesis guarantees that $\eng{U}
$ is inhabited. Using $\EMPCF_{A \B U}$ we get $f : A \to \B_\bot$ a maximal partial choice function. We show that $f$ must be total, that is that it is impossible that it takes the value $\bot$. Indeed assume $f(a) = \bot$ for some $a : A$ and consider $g : A \to \B_\bot$ that extends $f$ with $g_0(a) = 0_\B$. We have $g \prec f$, 
thus $\G(g) \notin \eng{U}$. Then, there exists $u : (A \times \B)^*$ such that $u \subset \G(g)$ and $u \notin 
U$. Using the decidability of equality in $A$, we can find $u'$ such that $u = u'@(a,0_\B)$ where $u' \subset \G(f)$. Symmetrically, by considering the extension $g$ of $f$ obtained by setting $g(a) = 1_\B$, there exists $v' \subset \G(f)$ 
such that $v'@(a,1_\B) \notin U$. Since $u' \star v' \subset \G(f)$, $u' \star v' \in U$. There must be $b : \B$ such that $(u' \star v')@(a,b) \in U$. But in both cases ($b = 0_\B$ or $1_\B$) 
there is a sublist ($u'@(a,0_\B)$ or $v'@(a,1_\B)$) that is not in $U$, contradiction. Hence, $f$ is total.
\end{proof}

The following definition, taken from~\cite{Herbelin1}, describes how
to turn a relation on $A$ and $B$ as a predicate over $(A \times
\B)^*$ that filters approximations.

\begin{definition}[Positive alignment]
Let $A$ and $B$ be types and $R$ a relation on $A$ and $B$. The
\emph{positive alignment $R_\top$ of $R$} is the predicate on $(A
\times B)^*$ defined by:
$$R_\top := \lambda u.\forall (a,b) \in u, R(a,b)$$
\end{definition}

Positive alignments can be characterised by the following property.

\begin{definition}[Downward prime]
Let $A$ and $B$ be types. We say that $T : \part((A \times B)^*)$ is
\emph{downward prime} when $u \in T$ and $v \in T$ implies $u \star v
\in T$. We denote by $\mathbf{D}_{AB}$ the collection of downward prime $T : \part((A \times B)^*)$.
\end{definition}


\begin{theorem}
If $R$ is a relation on $A$ and $B$, its positive alignment is
downward prime. Conversely, if $T$ is downward prime, it is the
positive alignment of the relation $|T|$ defined by
$$|T|(a,b) := [(a,b)] \in T$$
\end{theorem}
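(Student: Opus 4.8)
The plan is to prove the two directions of this characterisation separately, both by essentially unfolding the definitions of positive alignment and of downward primeness.

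First I would handle the forward direction: suppose $R$ is a relation on $A$ and $B$, and set $R_\top := \lambda u.\forall (a,b) \in u, R(a,b)$. To show $R_\top$ is downward prime, take $u, v : (A \times B)^*$ with $u \in R_\top$ and $v \in R_\top$; I must show $u \star v \in R_\top$, i.e.\ that $R(a,b)$ holds for every $(a,b) \in u \star v$. The key observation is the evident fact about the membership predicate on lists, namely $c \in u \star v \iff c \in u \lor c \in v$ — this follows from the definition $a \in u := \exists w,w'^{A^*}, w \star [a] \star w' = u$ together with associativity of $\star$. Given $(a,b) \in u \star v$, case-split on which of $u$ or $v$ it belongs to and apply the corresponding hypothesis. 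This direction is routine.

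For the converse, suppose $T : \part((A \times B)^*)$ is downward prime, and define $|T|(a,b) := [(a,b)] \in T$. I need to show $T = |T|_\top$, which by extensional equality of predicates amounts to showing both inclusions, i.e.\ for all $u : (A \times B)^*$, $u \in T \iff \forall (a,b) \in u, [(a,b)] \in T$. For the left-to-right inclusion, assume $u \in T$ and take $(a,b) \in u$; then there are $v, w$ with $v \star [(a,b)] \star w = u$, and I must extract $[(a,b)] \in T$ from $u \in T$. This is where I expect the main obstacle: downward primeness as defined only closes $T$ \emph{upward} under concatenation ($u,v\in T \implies u\star v\in T$), it says nothing a priori about sublists, so extracting $[(a,b)]\in T$ from $v\star[(a,b)]\star w\in T$ is not immediate from that single clause. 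I suspect the intended reading is that $T$ is additionally assumed (or may be assumed without loss of generality, as with the $\res{\eng{\cdot}}$ normalisation discussed earlier for $\FC$) to be $\subseteq$-downward closed, or that ``downward prime'' is meant to be read together with such a closure; under that reading $[(a,b)] \subseteq v \star [(a,b)] \star w = u \in T$ gives $[(a,b)] \in T$ directly. I would state explicitly which closure assumption I am using and note the parallel with the earlier minimality remarks.

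For the right-to-left inclusion, assume $\forall (a,b) \in u, [(a,b)] \in T$; I want $u \in T$. Proceed by induction on the list $u$. The base case $u = \varepsilon$ requires $\varepsilon \in T$; this again needs a mild hypothesis on $T$ (that it is nonempty, equivalently $\varepsilon \in T$, which holds for any $\subseteq$-downward-closed nonempty $T$ and in particular for $R_\top$), so I would fold it into the same standing assumption. For the inductive step $u = u'@a$ with $a = (a_0,b_0)$: from the hypothesis on $u$ I get both the hypothesis on $u'$ (every element of $u'$ is an element of $u$) and $[(a_0,b_0)] = [a] \in T$; the induction hypothesis gives $u' \in T$, and then downward primeness applied to $u'\in T$ and $[a]\in T$ yields $u' \star [a] = u'@a = u \in T$. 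Assembling the two inclusions gives $T = |T|_\top$, completing the proof; I would close by remarking that this together with the forward direction shows positive alignments are exactly the downward prime predicates (modulo the closure normalisation).
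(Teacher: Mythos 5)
Your proposal follows essentially the same route as the paper: the forward direction is the observation that membership in $u \star v$ splits as membership in $u$ or in $v$, and the substantive inclusion $|T|_\top \subseteq T$ is proved by induction on the list, applying the concatenation-closure clause at each step, exactly as in the paper's one-line argument. The one place where you go beyond the paper is also the most valuable part of your write-up: you correctly observe that the clause $u \in T \land v \in T \implies u \star v \in T$ alone yields neither $\varepsilon \in T$ (needed for the base case of the induction, since $\varepsilon \in |T|_\top$ always holds) nor the extraction of $[(a,b)] \in T$ from $u \in T$ (needed for the inclusion $T \subseteq |T|_\top$). The paper's proof silently elides both points, and the statement as literally given fails without them; for instance the set of all nonempty lists of copies of a fixed pair $(a,b)$ is closed under concatenation but omits $\varepsilon$, hence differs from its $|{\cdot}|_\top$. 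Your fix --- reading ``downward prime'' as carrying a standing $\subseteq$-downward-closure assumption, or normalising $T$ via $\res{\eng{T}}$ as in the earlier discussion of $\FC$ --- is the right one and is harmless in context, since every $T$ to which downward primality is applied in the paper either arises in that normal form or is only ever used through $\eng{T}$. So: same approach, but your version makes explicit a side condition the paper's proof takes for granted.
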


\begin{proof}
This is because $u \star v \in R_\top$, that is $\forall (a,b) \in u \star
v, R(a,b)$ is equivalent to $(\forall (a,b) \in u, R(a,b)) \land (\forall (a,b) \in
v, R(a,b))$, that is to $u \in R_\top \land v \in R_\top$, and, conversely, because $u \in |T|_\top$, that is
$\forall (a,b) \in u, [(a,b)] \in T$, is equivalent, by induction on $u$, using
downward primality at each step, to $u \in T$.
\end{proof}

Based on the equivalence between $\AC_{ABR}$ and $\GDC_{ABR_\top}$
in~\cite[Thm 7]{Herbelin1}, we obtain:
\begin{corollary}
$\GDC_{ABT}$ for $T$ downward prime characterises the full axiom of choice
  $\AC_{ABR}$, that is $\forall x^A, \exists y^B, R(a,b) \implies
  \exists f^{A \to B}, \forall x^A, R(a,f(a))$.
\end{corollary}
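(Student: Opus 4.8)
The plan is to read this off directly from the preceding theorem together with \cite[Thm~7]{Herbelin1}, which asserts $\AC_{ABR} \iff \GDC_{ABR_\top}$ for every relation $R$ on $A$ and $B$. The word ``characterises'' is to be understood schematically: the family $\{\GDC_{ABT} : T \text{ downward prime}\}$ and the family $\{\AC_{ABR} : R \text{ a relation on }A\text{ and }B\}$ prove each other instance by instance, and there are two inclusions to check.

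For the first inclusion, fix a relation $R$ on $A$ and $B$. By the preceding theorem, the positive alignment $R_\top$ is downward prime, so $\GDC_{ABR_\top}$ belongs to the restricted family; and by \cite[Thm~7]{Herbelin1} it is equivalent to $\AC_{ABR}$. Thus every instance of the full axiom of choice is matched by an instance of $\GDC$ over a downward prime predicate.

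For the converse inclusion, fix a downward prime $T : \part((A \times B)^*)$. The preceding theorem tells us that $T$ is the positive alignment of the relation $|T|$, that is $T = |T|_\top$ (the on-the-nose equality being supplied by the extensional equality of predicates assumed in the paper). Instantiating \cite[Thm~7]{Herbelin1} at $R := |T|$ gives $\AC_{AB|T|} \iff \GDC_{AB|T|_\top}$, and rewriting $|T|_\top$ as $T$ yields $\GDC_{ABT} \iff \AC_{AB|T|}$. Hence every instance of $\GDC$ over a downward prime predicate is matched by an instance of the full axiom of choice.

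Combining the two inclusions gives the claimed characterisation. There is no genuine obstacle here beyond bookkeeping of the parameters $A$, $B$, $T$, $R$; the only point that uses the paper's conventions is the identification of $|T|_\top$ with $T$ as predicates, which is exactly what the preceding theorem provides, and the only external input is \cite[Thm~7]{Herbelin1}.
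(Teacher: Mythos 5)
Your proof is correct and follows exactly the route the paper intends: the corollary is stated as an immediate consequence of the preceding theorem (downward prime predicates are precisely the positive alignments) combined with the cited equivalence $\AC_{ABR} \iff \GDC_{ABR_\top}$ from \cite[Thm 7]{Herbelin1}, and the paper gives no further argument. Your write-up merely makes the two directions of this bookkeeping explicit, which matches the intended proof.
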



We now show that $\GDC_{ABT}$ is also equivalent to $\EMPCF_{ABT}$ for
$T$ downward prime.

\begin{theorem}
For $T: \part((A \times \B)^*)$ downward prime for $A$ with decidable
equality, $\EMPCF_{A B T} \implies \GDC_{A B T}$.
\end{theorem}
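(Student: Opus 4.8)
The plan is to apply $\EMPCF_{ABT}$ directly to the downward prime predicate $T$ itself and to show that the maximal partial choice function it produces is forced to be total by the $A$-$\B$-approximability hypothesis of $\GDC$. The one preliminary fact I would isolate is the shape of $\eng{T}$ when $T$ is downward prime and $\varepsilon \in T$: by the previous theorem $T$ is the positive alignment $|T|_\top$, so $u \in T$ iff every pair occurring in $u$ satisfies $|T|$; unfolding $\eng{T}$ and instantiating its universal quantifier at singleton lists then gives $\alpha \in \eng{T} \iff (\forall (a,b),\ (a,b)\in\alpha \implies |T|(a,b))$. In particular $\eng{T}$ is closed under adjoining one pair satisfying $|T|$: if $\G(f)\in\eng{T}$ and $|T|(a,b)$, then $\lambda(x,y).\,\G(f)(x,y)\lor(x=a\land y=b)$ is again in $\eng{T}$.

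Assuming the hypothesis of $\GDC_{ABT}$, namely $\varepsilon\in T_{A\B\text{ap}}=\phi(T_{A\B\text{ap}})$ (being a fixed point of $\phi$), two things follow. Since $\phi(X)\subseteq\res{\eng{T}}$ for every $X$, we get $\varepsilon\in\res{\eng{T}}$, hence $\varepsilon\in T$; so the characterization above is available, and by the remark on when $\eng{\cdot}$ is inhabited, $\EMPCF_{ABT}$ is applicable. Moreover, the premise of the implication inside $\phi$ being vacuous at $\varepsilon$, for every $a$ there is $b$ with $[(a,b)]\in T_{A\B\text{ap}}\subseteq\res{\eng{T}}$, hence $[(a,b)]\in T$, i.e. $|T|(a,b)$. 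Now apply $\EMPCF_{ABT}$ to obtain $f:A\to\B_\bot$ with $\G(f)\in\eng{T}$ and $\G(g)\notin\eng{T}$ for all $g\prec f$, and suppose, for contradiction, that $f(a)=\bot$ for some $a$. Take $b$ with $|T|(a,b)$ as above and, using decidable equality on $A$, define $g:A\to\B_\bot$ by $g(x):=\inl(b)$ for $x=a$ and $g(x):=f(x)$ otherwise; then $a\notin\dom(f)$, $\dom(g)=\dom(f)\cup a$, and $g$ agrees with $f$ on $\dom(f)$, so $g\prec f$, while $\G(g)$ is $\G(f)$ with the pair $(a,b)$ added and hence lies in $\eng{T}$ by the closure property above --- contradicting maximality of $f$. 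Thus $f$ is total.

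Finally, from totality of $f$ I would pass to a total $f':A\to\B$ with $\G(f')=\G(f)$ in the routine way (case-analyse $f(a):\B_\bot$, the $\bot$ branch being absurd). Since $\G(f')=\G(f)\in\eng{T}$ unfolds exactly to $\forall u,\ u\subset\G(f')\implies u\in T$, the function $f'$ is an $A$-$\B$-choice function for $T$, which is the conclusion of $\GDC_{ABT}$. I expect the totality argument to be the only delicate point, and inside it the sole place decidable equality on $A$ is genuinely needed is the construction of the one-point extension $g$ as an honest decidable partial function; the rest is bookkeeping with the $\eng{\cdot}$/$\res{\cdot}$ notation and the closure property recorded at the start.
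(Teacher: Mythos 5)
Your proof is correct and follows essentially the same route as the paper's: apply $\EMPCF_{A\B T}$ to $T$ itself, use $A$-$\B$-approximability at $\varepsilon$ to produce a compatible $b$ for any point where $f$ is undefined, and derive a contradiction with maximality via downward primality. The only (harmless) variation is that you verify $\G(g)\in\eng{T}$ through the pointwise characterisation of $\eng{T}$ coming from the positive-alignment theorem, whereas the paper argues directly on an arbitrary finite approximation $u\subset\G(g)$ by splitting it as $u'\star[(a,b')]\star u''$.
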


\begin{proof}
Since $T$ is $A$-$B$-approximable, it contains $\varepsilon$, so that
$\eng{T}$ is non-empty. Thus, by $\EMPCF_{A B T}$, we get $f : A \to
B_\bot$ a maximal partial choice function. We show that $f$ must be
total.  Indeed, assume $a : A$ such that $f(a) = \bot$. By
$A$-$B$-approximability, we can obtain a $b$ such that $[(a,b)] \in
\res{\eng{T}}$. Let's now consider the function $g : A \to B_\bot$
defined by setting $g(a') = b$ if $a = a'$ and $g(a') = f(a')$
otherwise.  We have $g \prec f$, thus $\G(g) \notin \eng{T}$. But this
contradicts that we can also prove that any $u \subset \G(g)$ is in
$T$, that is $\G(g) \in \eng{T}$. Indeed, by decidability of equality
on $A$, either $u$ has an element of the form $(a,b')$ or not. In the
second case, $u \subset \G(f)$ and thus $u \in T$. In the first case,
$u$ has the form $u' \star (a,b') \star u''$ with $u' \in \G(f)$ and
$u'' \in \G(f)$, thus $u' \in T$ and $u'' \in T$. Since $u \subset
\G(g)$, we also have $b' = g(a) = b$. Then, by downward primality, we
get $u' \star [(a,b)] \star u'' \in T$.
\end{proof}

\begin{theorem}
For $T: \part((A \times B)^*)$ downward prime, $\overline{\GDC_{A B \mathbf{D}_{AB}}} \implies \overline{\EMPCFm_{A B \mathbf{D}_{AB}}}$.
\end{theorem}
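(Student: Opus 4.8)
The plan is to mirror the proof of Theorem~\ref{GDC-EMPCF-N}, replacing the enumeration of $\N$ by an appeal to $\GDC$ over an arbitrary domain but over the pointed codomain $B_\bot$. First I would simplify the goal. Since $T$ is downward prime, it is the positive alignment $R_\top$ of the relation $R := |T|$, so $\varepsilon \in T$ and $\eng{T} = \{\alpha^{\part(A\times B)} \mid \alpha \subseteq R\}$. In particular the premise $\exists\alpha,\alpha\in\eng{T}$ of $\EMPCFm_{ABT}$ holds, witnessed by the empty subset (equivalently, by $\varepsilon \in T$), and for $f \in A \parrow B$ one unfolds $f \in \tmaxrpf(\eng{T})$ to: $f \subseteq R$ and, for every $a \notin \dom(f)$ and every $b^B$, $\lnot R(a,b)$ — equivalently, $f\subseteq R$ and $\dom(f) = \{a \mid \exists b^B, R(a,b)\}$.

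Next I would produce such an $f$ from $\GDC$. Define $R_\bot : \part(A\times B_\bot)$ by $R_\bot(a,\inl(b)) := R(a,b)$ and $R_\bot(a,\bot) := \lnot\exists b^B, R(a,b)$, and set $T_\bot := (R_\bot)_\top$, a downward prime predicate on $(A\times B_\bot)^*$. I would check that $T_\bot$ is $A$-$B_\bot$-approximable: being a positive alignment, it satisfies $\res{\eng{T_\bot}} = T_\bot$, so it is enough to see that $T_\bot$ is a post-fixed point of its associated operator $\phi$, i.e. that for $u \in T_\bot$ and $a$ not occurring in $u$ there is $c^{B_\bot}$ with $u@(a,c) \in T_\bot$; since the elements of $u$ already satisfy $R_\bot$, this amounts to $\exists c^{B_\bot}, R_\bot(a,c)$, that is, to $(\exists b, R(a,b)) \lor \lnot(\exists b, R(a,b))$. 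Granting this, $T_\bot \subseteq \phi(T_\bot)$, hence $T_\bot \subseteq (T_\bot)_{AB_\bot\mathrm{ap}}$ and $\varepsilon \in (T_\bot)_{AB_\bot\mathrm{ap}}$. Applying $\overline{\GDC_{AB\mathbf{D}_{AB}}}$ at $(A,B_\bot,T_\bot)$ yields a total $f : A \to B_\bot$ with $\forall u, u\subset\G(f)\implies u\in T_\bot$; since $T_\bot$ is an alignment, this is just $\forall a, R_\bot(a,f(a))$.

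Finally I would read off the maximal partial function: viewing $f$ as a decidable partial function $A \to B_\bot$, take $f^* := \G(f) = \lambda(a,b)^{A\times B}. f(a) = \inl(b)$, which lies in $A\parrow B$, with $a\in\dom(f^*) \iff f(a)\neq\bot$. If $(a,b)\in f^*$ then $R_\bot(a,\inl(b))$, i.e. $R(a,b)$, so $f^* \subseteq R$ and thus $f^* \in \eng{T}$. For maximality, let $g \prec f^*$ with new point $a_0 \notin \dom(f^*)$ and $(a_0,b_0)\in g$; from $a_0\notin\dom(f^*)$ and coproduct case analysis on $f(a_0)$, we get $f(a_0) = \bot$, so $R_\bot(a_0,\bot)$ holds, i.e. $\lnot\exists b, R(a_0,b)$, whence $\lnot R(a_0,b_0)$; as $(a_0,b_0)\in g$, this gives $g \not\subseteq R$, i.e. $g \notin \eng{T}$. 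Hence $f^* \in \tmaxrpf(\eng{T})$, proving $\EMPCFm_{ABT}$.

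The one non-routine point — the main obstacle — is the approximability of $T_\bot$, which invokes excluded middle for the propositions $\exists b^B, R(a,b)$, exactly as in Theorem~\ref{GDC-EMPCF-N}. This is harmless here, and explains why the statement is an $\implies$ rather than an $\impliescl$: by the corollary above, $\overline{\GDC_{AB\mathbf{D}_{AB}}}$ is the full axiom of choice, which entails excluded middle via Diaconescu's argument in the presence of the predicate extensionality assumed in the logical system, so classical reasoning is available on the left-hand side. Everything else is bookkeeping about positive alignments and the operators $\eng{\cdot}$, $\res{\cdot}$, $\G$ and $\phi$.
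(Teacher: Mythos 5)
Your construction follows the second of the two routes that the paper's own proof mentions in its first sentence --- embedding partial functions by extending the codomain $B$ to $B_\bot$, as in Theorem~\ref{GDC-EMPCF-N} --- whereas the paper takes the first route, restricting the domain to $A' := \lambda a^A.\, \exists b^B, [(a,b)] \in T$. On your route the bookkeeping is sound: $T_\bot = (R_\bot)_\top$ is a positive alignment, hence downward prime, so $\GDC$ at $(A, B_\bot, T_\bot)$ is a legitimate instance of the restricted schema, and your verification that the resulting total $f : A \to B_\bot$ induces a maximal relational partial function in $\eng{T}$ is correct. (Minor quibble: your ``equivalently, $\dom(f) = \{a \mid \exists b^B, R(a,b)\}$'' needs double-negation stability of $\dom(f)$ in one direction, but you only use the correct direction.)

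The gap is exactly at the point you flag as non-routine. Approximability of $T_\bot$ requires $\exists c^{B_\bot}, R_\bot(a,c)$, i.e.\ the excluded-middle instance $(\exists b, R(a,b)) \lor \neg(\exists b, R(a,b))$, and you discharge it by claiming that the hypothesis gives full $\AC$ and hence excluded middle by Diaconescu. That step is not available here: Diaconescu's argument must apply choice to the two-element collection $\{U,V\}$ of subsets of $\B$ and exploit $P \implies U = V \implies f(U) = f(V)$, which requires forming that collection as a domain of quantification (a subset or quotient type) and not merely predicate extensionality; the paper's $\AC_{ABR}$ quantifies over types, and nothing in its system is claimed to make choice imply excluded middle --- indeed the paper's systematic distinction between $\implies$ and $\impliescl$ (Theorem~\ref{GDC-EMPCF-N} is stated only as $\impliescl$, for precisely this approximability issue) presupposes that it does not. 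So as written your argument establishes $\overline{\GDC_{A B \mathbf{D}_{AB}}} \impliescl \overline{\EMPCFm_{A B \mathbf{D}_{AB}}}$ rather than the stated intuitionistic implication. The paper's domain-restriction route avoids the case split entirely: for $a \in A'$ the existence of $b$ with $[(a,b)] \in T$ holds by definition of $A'$, and downward primality yields $u \star [(a,b)] \in T$ directly, at the price of a separate (and admittedly artificial) treatment of the case where $A'$ is finite.
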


\begin{proof}
There are two ways to embed a partial function from $A$ to $B$ into a
total function: either restrict $A$ to the domain of the function, or
extend $B$ into $B_\bot$, as in Theorem~\ref{GDC-EMPCF-N}. We give a
proof using the first approach.

Let $A'$ be the subset of $A$ such that $\exists b^B, [(a,b)] \in T$.
We show coinductively that if $A'$ is infinite, the restriction of $T$
on $A'$ is $A'$-$B$-approximable. First, we do have $\varepsilon \in
T$ because $\eng{T}$ is non empty. Then, assume $u \in T$ and $a : A'$
such that $\neg (\exists {b^B}, \ (a,b) \in u)$ (which is possible
since $A'$ is supposed infinite). Since $a$ is in $A'$, there is $b$
such that $[(a,b)] \in T$, and by downward primality, $u \star (a,b)
\in T$, hence $A'$-$B$-approximable by coinduction.

Thus, there is a total function $f:A' \to B$ such that $\G(f) \in
\eng{T}$, which induces a partial function $f'$ from $A
\parrow B_\bot$. It remains to show that $f'$ is maximal.
Let $a \notin \dom(f)$, that is such that $\forall b^B, \neg [(a,b)] \in
T$. Then, there is obviously no extension of $f'$ on $a$ that would be
in $\eng{T}$.

It remains to treat the case of $A'$ finite, which can be obtained by
(artificially) reasoning on the disjoint sum of $A'$ and $\N$, and
setting $T[(n,p)] := (n=p)$ on $\N$.
\end{proof}

\section{Conclusion}

While Brede and the first author~\cite{Herbelin1} investigated the
general form of a variety of choice and bar induction principles seen
as contrapositive principles, this paper initiated the investigation
of a general form of maximality and well-foundedness principles
equivalent to the axiom of choice. One of the surprise was that, up to
logical duality, two principles such as Teichmüller-Tukey lemma and
Berger's update induction were actually of the very same nature. By
seeing all these principles as schemes, we could also investigate how
to express Zorn's lemma and Teichmüller-Tukey lemma as mutual
instances the one of the other. Finally, by starting investigating how
maximality, when applied to functions, relates to totality in
the presence of either a countable domain or a finite codomain, we
initiated a bridge between maximality and well-foundedness
principles and the general family of choice and bar induction principles
from~\cite{Herbelin1}.

The investigation could be continued in at least five directions:
\begin{itemize}
\item In the articulation between $\LTT$ and $\EMPCF$: assuming an
  alternative definition of $\LTT$, say $\LTT^+$, where $\part{(A)}$ is
  represented as a characteristic function from $A$ to $\B$, that is,
  equivalently, as a function from $A$ to $\1_\bot$, one would get the
  following identifications:

$$
\begin{array}{ccccccc}
\LTT^+_{AT} &=& \EMPCF_{A\1\pi^*T} & \qquad & \LTT^+_{(A\times B)T} &=& \EMPCF_{ABT}\\
\LTT_{AT} &=& \EMPCF^-_{A\1\pi^*T} & \qquad & \LTT_{(A\times B)T} &=& \EMPCF^-_{ABT}\\
\end{array}
$$

\newcommand{\DC}{\mathbf{DC}}
\newcommand{\GBI}{\mathbf{GBI}}

\item In the articulation between a sequential definition of
  countably-finite character and countably-open predicate, as in
  $\LTT^N_{BT}$ and $\UI_{BT}$, and a non-sequential definition, as in
  $\EMPCF_{\N BT} $ and $\EMPCFm_{\N BT}$, similar to the connection
  between $\DC^{\mathit{prod.}}_{BT}$ and $\GDC_{\N BT}$ in
  \cite{Herbelin1}.

\item In the relation between $\EMPCF_{A BT}$ and $\EMPCFm_{A BT}$
  on one side and $\GDC_{ABT}$ on the other side, verifying that the
  correspondences between $\EMPCF_{\N BT}$ and $\GDC_{\N BT}$, and
  between $\EMPCF_{A\B T}$ and $\GDC_{A\B T}$ hold, at least
  classically, in both directions, the same way as they do in the case
  $T$ downward prime.

\item In the articulation between $\LTT$ and $\GUI$, formulating statements
  dual to $\EMPCF$ and $\EMPCF-$ and connecting them to the dual of $\GDC$, that is $\GBI$
  \cite{Herbelin1}, analysing the role of classical reasoning and
  decidability of the equality on the domain in the correspondences.

\item In the relation between $\LTT$, $\EMPCF$, $\EMPCF-$ and other
  maximality principles than Zorn's lemma, also studying other
  well-foundedness principles than $\UI$.
\end{itemize}

In particular, an advantage of $\EMPCF$ and $\EMPCF-$ over $\GDC$ is
that their more general form is classically equivalent to the axiom of
choice while the most general form of $\GDC$ is inconsistent.

\bibliography{citations}

\end{document}